\newif\iflong
\newif\ifshort
\theoremstyle{plain}
\newtheorem{lemma}{Lemma}
\newtheorem{corollary}{Corollary}
\newtheorem{observation}{Observation}
\newtheorem{claim*}{Claim}
\newtheorem{longtheorem}{Theorem}
\newtheorem{longlemma}[longtheorem]{Lemma}
 \theoremstyle{nonumberplain}
 \newtheorem{proof}{Proof}
 \newtheorem{proofsketch}{Proof Sketch}
 \newtheorem{claimproof}{Proof of the Claim}
\newcommand{\yesinstance}{\textup{YES}-instance\xspace}
\newcommand{\noinstance}{\textup{NO}-instance\xspace}
\newcommand{\bigO}[1]{{\ensuremath{\mathcal{O}\!\left(#1\right)}}\xspace}
\newcommand{\bigoh}{\mathcal{O}}
\newcommand{\N}{\mathbb{N}}
\newcommand{\calE}{{\cal E}}
\newcommand{\calV}{{\cal V}}
\newcommand{\calP}{{\cal P}}
\newcommand{\calQ}{{\cal Q}}
\newcommand{\calR}{{\cal R}}
\newcommand{\para}[1]{\smallskip \noindent \textbf{#1}\quad}
\newcommand{\fpt}{\textup{FPT}\xspace}
\newcommand{\XP}{\textup{XP}\xspace}
\newcommand{\wone}{\textup{W[1]}\xspace}
\newcommand{\NP}{\textup{NP}\xspace}
\newcommand{\np}{\NP\xspace}
\newcommand{\nphard}{\np-hard\xspace}
\newcommand{\nphardness}{\np-hardness\xspace}
\newcommand{\wonehard}{\wone-hard\xspace}
\newcommand{\labeltop}[1]{{\text{\makebox[0pt][c]{#1}}}}
\newcommand{\editcost}[2]{\norm{{#1} - {#2}}_{\neq 0}}
\newcommand{\descendants}{\widetilde{\chi}}
\newcommand{\egr}[1]{G^{#1}}
\newcommand{\problemtitle}[1]{\gdef\@problemtitle{#1}}
\newcommand{\probleminput}[1]{\gdef\@probleminput{#1}}
\newcommand{\problemtask}[1]{\gdef\@problemtask{#1}}
\definecolor{cb_orange}{rgb}{0.859 0.427 0.0}
\definecolor{cb_blue}{rgb}{0.0 0.427 0.859}
\definecolor{cb_violet}{rgb}{0.714 0.427 1.0}
\definecolor{cb_dark_seal}{rgb}{0.0 0.286 0.286}
\definecolor{cb_seal}{rgb}{0.0 0.573 0.573}
\definecolor{cb_pink}{rgb}{1.0 0.427 0.714}
\definecolor{cb_rose}{rgb}{1.0 0.714 0.859}
\definecolor{cb_purple}{rgb}{0.286 0.0 0.573}
\definecolor{cb_light_blue}{rgb}{0.427 0.714 1.0}
\definecolor{cb_vlight_blue}{rgb}{0.714 0.859 1.0}
\definecolor{cb_red}{rgb}{0.573 0.0 0.0}
\definecolor{cb_brown}{rgb}{0.573 0.286 0.0}
\definecolor{cb_green}{rgb}{0.141 1.0 0.141}
\definecolor{cb_yellow}{rgb}{1.0 1.0 0.427}
\definecolor{cb2_orange}{rgb}{0.961, 0.475, 0.227}
\definecolor{cb2_purple}{rgb}{0.663, 0.353, 0.631}
\definecolor{cb2_cyan}{rgb}{0.522, 0.753, 0.976}
\definecolor{cb2_blue}{rgb}{0.059, 0.125, 0.502}
\DeclareMathOperator{\dr}{dr}
\DeclareMathOperator{\tw}{tw}
\DeclareMathOperator{\type}{\mathcal{T}}
\DeclareMathOperator{\col}{col}
\DeclareMathOperator{\cost}{cost}
\DeclareMathOperator{\MM}{\ensuremath{\mathbf{M}}}
\DeclareMathOperator{\II}{\ensuremath{\mathcal{I}}}
\DeclareMathOperator{\g}{\ensuremath{\boldsymbol{\gamma}}}
\newcommand{\primal}[1]{\ensuremath{G_P(#1)}\xspace}
\newcommand{\coloring}[2]{$(#1,#2)$-coloring\xspace}
\newcommand{\cmcoloring}{\coloring{c}{m}}
\newcommand{\faireditp}[1]{$#1$-\textsc{FDMC}\xspace}
\newcommand{\binfairedit}{\faireditp{2}}
\newcommand{\fairedit}{\textsc{FDMC}\xspace}
\newcommand{\dist}[2]{\text{Hamm}(#1,#2)\xspace}
\title{Matrix Editing Meets Fair Clustering: Parameterized Algorithms and Complexity}
\author {
    Robert Ganian, Hung P. Hoang, Simon Wietheger
}
\begin{document}

\maketitle

\begin{abstract}
We study the computational problem of computing a fair means clustering of discrete vectors, which admits an equivalent formulation as editing a colored matrix into one with few distinct color-balanced rows by changing at most $k$ values. While \NP-hard in both the fairness-oblivious and the fair settings, the problem is well-known to admit a fixed-parameter algorithm in the former ``vanilla'' setting. As our first contribution, we exclude an analogous algorithm even for highly restricted fair means clustering instances. We then proceed to obtain a full complexity landscape of the problem, and establish tractability results which capture three means of circumventing our obtained lower bound: placing additional constraints on the problem instances, fixed-parameter approximation, or using an alternative parameterization targeting tree-like matrices.
\end{abstract}


\section{Introduction}
\label{sec:intro}

In a typical matrix modification problem, we are given a
 matrix $\MM$ and are tasked with modifying it into some matrix $\MM'$ satisfying a specified desirable property. Matrix modification problems arise in a broad range of research contexts directly related to artificial intelligence and machine learning, prominently including recommender systems and data recovery~\cite{CandesP10,CandesR12,ElhamifarV13} 
but also occurring in, e.g., Markov inference~\cite{RothY05}  
and computational social choice~\cite{BredereckCHKNSW14}.
Matrix completion is perhaps the most classical example of matrix modification: there, certain entries in the provided matrix $\MM$ are marked as ``missing'' and the task is to complete the missing entries. A second classical example---one which will be the focus of our interests here---is matrix editing, where we are allowed to alter at most $k$ entries of a (complete) matrix $\MM$ in order to achieve the sought-after property. 

The vast majority of matrix completion and matrix editing problems are known to be \NP-hard, leading to the investigation of these problems using the more refined \emph{parameterized complexity} paradigm~\cite{CyganFKLMPPS15}.
 There, the general aim is to circumvent the intractability of problems by designing algorithms with running times which are not exponential in the whole input size, but only exponential in some well-defined integer \emph{parameters} of the input. From a complexity-theoretic perspective, we ask for which natural parameters $p$ one can obtain an algorithm solving the problem in time $f(p)\cdot n^{\bigoh(1)}$, where $f$ is a computable function and $n$ the input size; such algorithms are called \emph{fixed-parameter} and form a weaker (but still desirable) baseline of tractability than polynomial-time algorithms. 
 
The parameterized complexity of matrix completion was first investigated by Ganian, Kanj, Ordyniak and Szeider~\cite{GanianKOS18},
who targeted the two fundamental cases where $\MM'$ must adhere to an input-specified bound $r$ on either the rank, or the number of distinct rows. 
Subsequent works in the completion setting then considered a variety of different constraints on the output matrix $\MM'$~\cite{EibenGKOS21,EibenGKOS23j,GanianHKOS22,KoanaFN20,KoanaFN23}.
For matrix editing, Fomin, Golovach and Panolan~\cite{FominGP20}
studied the parameterized complexity of the two problems analogous to those considered in the completion setting~\cite{GanianKOS18}
and as their main positive result obtained a fixed-parameter algorithm for the task of editing a binary matrix to achieve at most $r$ distinct rows, parameterized by the budget $k$ on the number of altered entries (i.e., edits).\footnote{We remark that there, the bound $r$ is chosen to be on the number of columns as opposed to rows; however, the role of columns and rows is entirely symmetric.}
This task is particularly interesting, as it precisely corresponds to the classical \textsc{Binary Means Clustering} problem~\cite{kleinberg2004segmentation,OstrovskyR02}---a discrete counterpart to the means clustering that is frequently used on real-valued data in machine learning~\cite{CharikarHHVW23,MaromF19,ZhangLX20}.
Intuitively, the reason the editing and clustering tasks coincide is that each time we edit a row $\vec v$ to its final value $\vec w$, the number of edits is equal to the cost of placing a data point $\vec v$ into a cluster centered at $\vec w$; see Figure~\ref{fig:exampleME} (\emph{Top}) for an illustration.

In this article, we investigate the computational complexity of the same task of editing a matrix to achieve at most $r$ distinct rows, but in the presence of a \emph{fairness constraint}. The reason for considering fairness in this setting is directly tied to the clustering perspective, where requiring each of the clusters to be ``fair'' is equivalent to ensuring that the (at most $r$) distinct rows in $\MM'$ are ``fair''. Fair clustering has become an increasingly important research topic in recent years~\cite{Amagata24,BackursIOSVW19,BandyapadhyayFS24,DickersonEMZ23}, 
starting from the pioneering paper of Chierichetti, Kumar, Lattanzi and Vassilvitskii~\cite{Chierichetti0LV17}.
The fairness constraint we adopt here is the same as in the latter foundational work: each row is equipped with a specified color (representing an aspect of that data point that should be proportionately represented in clusters) and each cluster must admit a partitioning into \emph{fairlets}, which are minimum sets of colored elements exhibiting the same color ratio as $\MM$.
(For example, for two colors with 1:1 ratio, a fairlet contains one element of each color---see \cref{fig:exampleME} (\emph{Bottom}).) 
We remark that while this ``canonical'' fairness constraint has been used in several related works~\cite{AhmadianEK0MMPV20,BandyapadhyayFS24,Casel0SW23}, 
we also discuss possible extensions of our results to other fairness notions in Section~\ref{sec:conclusions}.

\begin{figure}[ht]
    \centering
    \includegraphics[scale=1,page=1]{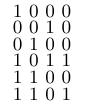}%
    \hspace{1mm}
    \includegraphics[scale=1,page=2]{graphics/example_me.pdf}%
    \hspace{1mm}
    \includegraphics[scale=1,page=3]{graphics/example_me.pdf}%

    \includegraphics[scale=1,page=1]{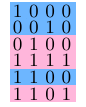}%
    \hspace{1mm}
    \includegraphics[scale=1,page=3]{graphics/example_fair_ME.pdf}%
    \hspace{1mm}
    \includegraphics[scale=1,page=4]{graphics/example_fair_ME.pdf}%
    \caption{(\textbf{Top}) A matrix (left) received 7 \underline{\textcolor{red}{edits}} (center), reducing the number of distinct rows from 6 to 2. Equivalently, the rows are partitioned into clusters (right) with centers 0000 and 1011, respectively. Hamming distances between rows and their respective center equal the number of edits in the middle.
    (\textbf{Bottom}) Example of matrix editing with fairness colors \textcolor{cb_light_blue}{blue} and \textcolor{cb_rose}{rose}. The center matrix has 2 distinct rows but is not fair. The right matrix is fair, with one cluster consisting of a single fairlet (a blue row and a rose row) and the other cluster consisting of two fairlets.}
    \label{fig:exampleME}
\end{figure}

In this work, we consider both the binary and higher-domain settings. 
We denote our general problem of interest \textsc{Fair Discrete Means Clustering} (or \fairedit) over some input-specified domain (of distinct entries); this matches the fair version of the matrix editing task where $\MM'$ must have at most $r$ distinct rows. 
\faireditp{2} is then the restriction to the binary case and is equivalent to \textsc{Binary Means Clustering} under the aforementioned fairness constraint. Formal definitions are provided in Section~\ref{sec:prelims}.


\paragraph{Contributions.}
Given the aforementioned fixed-parameter algorithm of Fomin, Golovach and Panolan~\cite{FominGP20}
for \textsc{Binary Means Clustering} parameterized by $k$, a first natural question that arises in our study is whether one can achieve an analogous result when we require the obtained clustering to be fair. As our first contribution, we provide a non-trivial reduction which rules this out under well-established complexity assumptions:

\begin{restatable}{theorem}{thmwhard}
\label{thm:whard_fixed}
   \binfairedit is \wonehard when parameterized by the fairlet size $\tilde{c}$ plus the budget $k$, even if $\MM$ already achieves the target number $r$ of distinct rows.
\end{restatable}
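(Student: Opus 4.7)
The plan is to reduce from \colorclique, which is \wonehard parameterized by the clique size $k$, producing an instance of \binfairedit whose fairlet size $\tilde{c}$ and edit budget $k'$ are both bounded by polynomial functions of $k$ (so that hardness in $\tilde{c}+k$ follows), whose input matrix $\MM$ already has exactly $r$ distinct rows, and which admits a fair editing within budget if and only if the input graph $G$ contains a multicolored $k$-clique.

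For the construction, I would use a constant (or \bigO{k}) number of fairness colors, keeping $\tilde{c}$ small. The rows of $\MM$ fall into $r$ equivalence classes---\emph{row types}---each associated with a combinatorial role: vertex-row types grouped by class $V_i$, and edge-slot row types, one per pair $(i,j)$ with $i<j$. The columns of $\MM$ encode vertex and edge identifiers so that the Hamming distance between row types reflects structural incidence: moving a row corresponding to a vertex $v \in V_i$ into the $(i,j)$-slot type is cheap if $v$ is incident to an edge of $G$ in the corresponding slot, and prohibitively expensive otherwise. Each edge-slot type starts with a color imbalance that the fairness requirement forces us to correct by importing exactly the right vertex rows. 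A fair re-partitioning within budget $k'=\bigO{k^2}$ then corresponds to selecting, for each pair $(i,j)$, an edge $\{u,v\} \in E(G)$ with $u \in V_i$ and $v \in V_j$, and moving the associated vertex rows into that slot; conversely, a multicolored $k$-clique in $G$ directly yields such a fair editing.

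The main obstacle will be enforcing \emph{cross-slot consistency}: the same vertex of $V_i$ must be chosen for all $\binom{k}{2}$ slots involving class $i$, otherwise the clique argument fails. I would handle this by introducing shared ``anchor'' columns across the slots involving $V_i$ and by replicating each vertex's row with appropriate multiplicity, so that choosing different vertices of $V_i$ in different slots incurs a Hamming penalty exceeding $k'$. The delicate balance is to simultaneously ensure (i) that $\MM$ has exactly $r$ distinct rows, (ii) that cheap (incident) moves and expensive (non-incident) moves are separated by more than $k'$, and (iii) that consistency penalties exceed $k'$ while the total cost of a legitimate clique stays within $k'$; this forces the column encoding to carefully share columns across the roles of slot identification, vertex identification, and class anchoring.
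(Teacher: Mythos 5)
Your high-level plan---reduce from \colorclique, keep the number of fairness colors in $\bigO{k}$, give certain types a deliberate color imbalance that fairness forces to be repaired by importing rows from elsewhere, and calibrate Hamming distances so that repairs must factor through actual graph edges---matches the paper's strategy. But the mechanism you propose for the ``main obstacle'' (cross-slot consistency) is both underspecified and, I think, the wrong kind of tool for this problem. In matrix \emph{editing} there is no global decision variable you can expose in an ``anchor column'': a solution is just a collection of per-row edits, and no single row encodes ``which vertex of $V_i$ was chosen.'' So making ``choosing different vertices of $V_i$ in different slots'' incur a large Hamming penalty is not something you can do by adding columns; the cost of moving a row into a slot depends only on that row and that slot, never on which other rows were moved elsewhere. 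Replicating vertex rows with high multiplicity does not repair this, and is the sort of hammer that tends to blow up the instance without adding a consistency constraint.

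The paper resolves consistency by a \emph{fairness-flow} argument rather than by penalizing inconsistency: for each class $i$ there are $q{+}1$ excess-one-row types $\calV_{it}$ whose excess rows must all be absorbed cheaply by a single vertex-type $V_{h(i)}$ (aggregating onto one target is forced by a $3(q{+}1)$ vs.\ $3(3q{-}1)$ cost gap). That absorption makes $V_{h(i)}$ unfair by exactly $q{-}1$ rows of the colors $[q]\setminus\set{i}$, which can only be repaired cheaply by pulling rows from edge-types $E_{ab}$ with $a=h(i)$ or $b=h(i)$. Meanwhile the unfairness in the $\calE_{ijt}$ types forces exactly two interset rows out of each $\calR_{ij}$, and the distance table ties those two rows to a single shared $E_{ab}$. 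Consistency then emerges because the same $E_{ab}$ must simultaneously satisfy the color deficits at $V_{h(i)}$ and at $V_{h(j)}$, i.e., the chosen vertices must actually be adjacent. That is qualitatively different from an anchor-column penalty: there is nothing you pay for being inconsistent---an inconsistent assignment simply cannot balance the colors at all within the budget. Without this (or an equally concrete alternative), your proposal's item (iii) is not established, so the reduction as sketched has a genuine gap at its core.
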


One may notice that Theorem~\ref{thm:whard_fixed} excludes fixed-parameter algorithms not only when parameterized by $k$ alone, but even if the parameter includes $\tilde{c}$, the fairlet size and---as a direct consequence---also the number of colors. Essentially, our reduction shows that the problem becomes intractable w.r.t.\ $k$ if all the clusters must include a balanced combination of a small number of colors.
However, what is perhaps even more remarkable is that the lower bound holds even if $\MM$ already has at most $r$ distinct rows, i.e., it also applies to instances which are trivial in the ``vanilla'' setting without fairness.

In the remainder of the paper, we provide tractability results to circumvent this strong lower bound via three different avenues: additional constraints, approximation, and alternative parameterizations.

\subparagraph{Additional Constraints.}
In Theorem~\ref{thm:k_le_c}, we show that \fairedit\ is, in fact, fixed-parameter tractable w.r.t.\ the number of edits when dealing with instances where the fairlet sizes are ``sufficiently large''. 

\begin{restatable}{theorem}{thmklec}
\label{thm:k_le_c}
	When restricted to instances with $\tilde{c} > k$, \fairedit is fixed-parameter tractable w.r.t.~$k$.
%
%
\end{restatable}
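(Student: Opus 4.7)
\begin{proofsketch}
The plan is to exploit the strict inequality $\tilde{c} > k$ to reduce the instance to a tiny structured core that we can solve by enumeration and min-cost flow. First I would observe that every cluster in the output matrix $\MM'$ has size at least $\tilde{c} > k$ while at most $k$ of its rows can be edited; hence each cluster contains an unedited row whose original value equals the cluster center. I would then classify each distinct row value $\vec{v}$ of multiplicity $n_v$ as \emph{heavy} if $n_v > k$ and \emph{light} otherwise, and as \emph{feasible} if its $n_v$ rows already form a valid fair cluster on their own (i.e.\ $n_v$ is a multiple of $\tilde{c}$ with colors in the correct ratio) and \emph{infeasible} otherwise; every light value is automatically infeasible because $n_v \leq k < \tilde{c}$.

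The first structural step is to bound the set $T$ of values the solution \emph{touches} (those that either lose a row or whose cluster receives one). Since every edit counts towards the budget, there are at most $k$ migrations, so $|T| \leq 2k$; as every infeasible value must be touched, an instance with more than $2k$ infeasible values is a \noinstance. The second, more delicate, step is to prove that in some optimal solution no feasible heavy value is touched. For a feasible heavy cluster $C_{\vec v}$, the budget forbids changing its size by a full fairlet, so together with feasibility the incoming and outgoing migrations must match color-by-color. Given any such same-color pair $(r_{\text{in}},r_{\text{out}})$ where $r_{\text{in}}$ originates at $\vec u$ and $r_{\text{out}}$ is destined for a cluster $C_{\vec w}$, I would reroute $r_{\text{in}}$ directly to $C_{\vec w}$ while keeping $r_{\text{out}}$ in place. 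This swap preserves every cluster's size and color balance, and by the triangle inequality $\dist{\vec u}{\vec w} \leq \dist{\vec u}{\vec v} + \dist{\vec v}{\vec w}$ it does not increase the edit count; iterating strictly reduces the number of migration endpoints incident to feasible heavy clusters and thus terminates.

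With feasible heavy clusters removed from the picture, the algorithm enumerates, for each $\vec v \in T$, the final size $s_{\vec v}$ of the cluster centered at $\vec v$ in $\MM'$ (where $s_{\vec v} = 0$ encodes the absence of such a cluster). Since $s_{\vec v}$ must be a non-negative multiple of $\tilde{c}$ within $k$ of $n_v$ and $\tilde{c} > k$, there are at most two admissible choices per value, giving at most $2^{|T|} = 2^{\mathcal{O}(k)}$ size profiles. For each profile I would verify that the induced total number of clusters (together with the untouched feasible heavy ones) is at most $r$ and then solve independently for each color $c$ a min-cost transportation problem on $T$ with supplies $n_{\vec v, c}$, demands $s_{\vec v}\,\alpha_c / \tilde{c}$, and per-unit cost $\dist{\vec u}{\vec v}$; we accept if the sum of transportation costs over all colors is at most $k$ for some profile. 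Per-color balancing of supply and demand is automatic since both the global matrix and the untouched feasible heavy values are color-balanced.

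The main obstacle will be a rigorous statement of the shortcut argument: one must carefully track color-wise pairings and verify that each swap preserves the fairness of both the source and destination clusters without inadvertently introducing new touches at other feasible heavy clusters. Granting this, the overall runtime is $2^{\mathcal{O}(k)} \cdot \polyn(n)$, dominated by the $\mathcal{O}(4^k)$ enumeration of profiles and polynomial-time min-cost flow computations, yielding fixed-parameter tractability with respect to $k$.
\end{proofsketch}
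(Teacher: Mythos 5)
Your proposal is correct and follows the same backbone as the paper's proof: no new types are created since every fair cluster has size at least $\tilde c > k$ and hence contains an unedited row; $\MM$-fair (your ``feasible'') types are untouched; at most $2k$ types can be touched, so more than $2k$ $\MM$-unfair types means a \noinstance; and each touched type's final cluster size lies in a window of width at most $2k < 2\tilde c$, leaving at most two admissible choices and hence a $2^{\bigoh(k)}$ branching over size profiles. You depart from the paper in two places, both for the better. First, where the paper simply asserts that no row is edited from or into an $\MM$-fair type, you supply a genuine justification: a triangle-inequality rerouting that converts any solution into one where fair types are isolated. This is the more careful reading, since a fair cluster could in principle simultaneously lose and gain a like-colored row, and only the WLOG shortcut rules that out (the paper makes this argument explicit only later, in the approximation section). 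Second, once the size profile is fixed, the paper exhaustively enumerates the $\bigoh(k!)$ matchings of colored half-edges, whereas you observe that the residual assignment decomposes color-by-color into a min-cost transportation problem on at most $2k$ types, solvable in polynomial time; this improves the running time from $\bigoh(2^{2k}k!\cdot kn + m)$ to $2^{\bigoh(k)}\polyn(n,m)$. (Your ``heavy'' qualifier is redundant, since feasibility already forces $n_v\ge\tilde c > k$, but it is harmless.)
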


Theorem~\ref{thm:k_le_c} is fairly surprising, as it shows that the effect of the fairlet size on the problem's complexity is ``non-uniform'': instances with fairlet size $1$ precisely correspond to the vanilla setting and are hence also fixed-parameter tractable w.r.t.\ $k$.
Moreover, we also obtain a fixed-parameter algorithm for the problem when parameterized by the budget plus the bound $r$ on the number of clusters. 

\begin{restatable}{theorem}{thmkplusr}
\label{thm:k_plus_r}
    \fairedit is fixed-parameter tractable with respect to $k+r$.
\end{restatable}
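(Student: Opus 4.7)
The plan is to distinguish two cases based on whether the fairlet size $\tilde{c}$ exceeds the budget $k$. If $\tilde{c} > k$, Theorem~\ref{thm:k_le_c} already yields an \FPT\ algorithm parameterized by $k$ alone, which is also \FPT\ in $k + r$. So the interesting case is $\tilde{c} \le k$, where the number of distinct colors is bounded by $\tilde{c} \le k$, since each color must contribute at least one entry to every fairlet.

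The first key observation is that any \yesinstance\ has at most $r + k$ distinct row values in $\MM$. Indeed, in any valid solution each of the at most $r$ clusters either has its center equal to some row value of $\MM$ (namely the value of an unchanged row), accounting for at most $r$ distinct row values in total, or consists exclusively of edited rows, each of which uses at least one unit of the budget, contributing at most $k$ additional distinct row values. Hence the algorithm may reject any input in which $\MM$ has more than $r + k$ distinct row values; let $R$ denote this set and assume $|R| \le r + k$.

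Next, I would classify the clusters of a hypothetical solution as \emph{anchored} (containing at least one unchanged row, so the center lies in $R$) or \emph{unanchored} (every row of the cluster is edited, so the cluster has at most $k$ rows and there are at most $k$ such clusters in total). The algorithm enumerates the structural skeleton of the solution in two steps: (a) it guesses the subset of $R$ of size at most $r$ that forms the anchored-cluster centers, giving at most $2^{r+k}$ options; and (b) for each of the at most $k$ unanchored clusters, it guesses the multiset of (value, color)-types it contains. Since the number of (value, color)-types is at most $|R| \cdot \tilde{c} \le (r+k)k$ and the combined multiplicity across all unanchored clusters is at most $k$, the number of configurations in step (b) is bounded by a function of $k + r$. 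Given the enumerated structure, the center of each unanchored cluster is determined as the coordinate-wise mode of its rows, and one can directly verify its fairness and compute its contribution to the edit cost.

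Finally, for each enumerated structure I would distribute the remaining rows among the anchored clusters by formulating an integer linear program whose variables $y_{t,j}$ count the number of rows of type $t$ assigned to anchored cluster $j$. The constraints enforce that each type's residual multiplicity is fully placed, that each cluster's color vector is a nonnegative integer multiple of the fairlet color ratio (a linear constraint), and that the total cost stays within the residual budget (linear once the centers $c_j$ are fixed by the enumeration). With $O((r+k) k r)$ variables, Lenstra's theorem solves the ILP in \FPT\ time, and combining all enumerations yields the desired running time. The main obstacle I anticipate is handling the unanchored clusters cleanly, as their centers are a priori arbitrary vectors; the key is the observation that any such cluster has size at most $k$ and is therefore describable, up to its center, by a type-multiset of total size at most $k$, which keeps the enumeration within \FPT.
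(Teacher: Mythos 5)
Your proof is correct and reaches the same destination, but takes a genuinely different route from the paper. The paper's proof, after the same two initial reductions you use (reject if $\dr(\MM) > k+r$, and invoke Theorem~\ref{thm:k_le_c} when $\tilde{c} > k$), directly enumerates candidate \emph{reduced edit graphs}: since every non-self-loop edge has weight $\geq 1$, there are at most $k$ edges over at most $k+r$ old types plus at most $k/2$ new-type placeholders (the tighter $k/2$ bound because a new-type cluster must draw from $\geq 2$ original types); the new types are then resolved by the majority rule of Lemma~\ref{lem:nice_solution} and the graph is verified for fairness and cost. That is, the paper enumerates \emph{all} $\leq k$ edits outright and needs no further optimization step. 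You instead enumerate only the ``unanchored'' part of the solution (your key observation that unanchored clusters hold at most $k$ rows in total is the analogue of the paper's bound on new types) and delegate the anchored-cluster assignment to Lenstra's ILP over $y_{t,z,j}$ variables. This works: the fairness ratio constraints are indeed linear once the fairlet proportions are cleared of denominators, the variable count is bounded by a function of $k+r$ using $c \leq \tilde{c} \leq k$, and Lenstra gives FPT time. Your ILP is heavier machinery than strictly necessary---the paper avoids it because the $\leq k$ anchored-cluster edits can also be enumerated directly---but it buys you a cleaner separation between the ``small, guessed'' part of the solution and the ``large, routine'' assignment. Two small things you should tighten: (i) your justification that $\MM$ has at most $r+k$ distinct rows is worded in terms of cluster centers, but the clean argument is that each original type whose rows are all edited away costs $\geq 1$ edit, so at most $k$ such types disappear and at most $r$ survive; (ii) you should explicitly guess $r_1 + r_2 \leq r$ for the split between anchored and unanchored clusters (or count distinct post-majority-vote centers afterwards) so the $\dr(\MM') \leq r$ bound is actually enforced.
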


These results allow us to piece together the full complexity landscape of \fairedit\ when parameterized by every combination of $k$, $r$, $\tilde{c}$, the number $c$ of colors and the domain, as we discuss at the end of Section~\ref{sec:additionalc}. Nevertheless, the proofs of these two theorems are comparatively simple and can hence be seen as a gentle introduction to the reasoning we will employ for
 the results which form the bulk of our algorithmic contributions---specifically Theorems~\ref{thm:approx} and~\ref{thm:tw}.

\subparagraph{Approximation.}
Theorems~\ref{thm:k_le_c} and~\ref{thm:k_plus_r} allow us to circumvent the aforementioned lower bound if certain conditions are met; however, a more generally applicable approach would be to ask for a fixed-parameter algorithm parameterized by $k$ alone that can compute a fair clustering which is at least approximately optimal (in the number $k$ of edits). 
Fixed-parameter approximation algorithms have found applications for a number of other clustering problems to date~\cite{BandyapadhyayFS24,GoyalJ23,ZhangCLCHF24}.

In terms of approximation, the ``vanilla'' \textsc{Binary Means Clustering} is known to admit a randomized approximation~\cite{OstrovskyR02} and also a deterministic approximation~\cite{FominGLP020}.
However, both of these algorithms require the number of clusters to be fixed in order to run in polynomial time, and moreover none of the techniques developed in the previous works can be directly applied to solve our problem of interest here.

Instead, we develop a new approach utilizing a matching-based decomposition of the \emph{edit graph}, which is a hypothetical structure capturing the modifications carried out by an optimal solution. 
By using this decomposition to establish the existence of a near-optimal and ``well-structured'' clustering, we obtain the following constant-factor approximation:

\begin{restatable}{theorem}{thmapprox}
\label{thm:approx}
\fairedit\ admits a $(5 - 3/\tilde{c})$-approximate fixed-parameter algorithm with respect to~$k$.
\end{restatable}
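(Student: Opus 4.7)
The plan is to establish a structural theorem characterizing near-optimal fair clusterings, then exploit that structure to enumerate candidates in FPT time with respect to $k$ alone.

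\textbf{Step 1 (Matching-based decomposition of the edit graph).} Fix a hypothetical optimal fair clustering $\MM^*$ of cost $OPT \le k$. Call an input row \emph{edited} if its image in $\MM^*$ differs from itself; the set $E$ of edited rows satisfies $|E| \le OPT \le k$, since every edited row contributes at least one edit. Within each cluster of $\MM^*$, the fairlets partition the rows into groups of size $\tilde{c}$, and I call a fairlet \emph{active} if it meets $E$. The \emph{edit graph} is defined on the rows of active fairlets, with hyperedges corresponding to those fairlets. Inside each active fairlet, I would compute a matching-based decomposition that pairs every edited row with an \emph{anchor}: a static (unedited) row of the same fairlet when one exists, or another edited row from the same fairlet when all rows are edited.

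\textbf{Step 2 (Well-structured near-optimal solution).} Using the matching, I would construct a new fair clustering $\MM'$ whose active clusters have their center replaced by the value of one of their anchor rows, while the remaining (static) rows keep their cluster assignment. Each edited row $v$ is reassigned to the cluster of its anchor partner $a$, and the triangle inequality on Hamming distance yields the local bound $\textup{Hamm}(v,a) \le \textup{Hamm}(v,w^*) + \textup{Hamm}(w^*,a)$, where $w^*$ was the original center. Summing over all matched pairs and carefully accounting for (a) the at-most-doubled cost incurred by rerouting edits through anchors, (b) the extra contribution of all-edited fairlets where both endpoints of each matching pair already carry edit cost, and (c) the amortization over the $\tilde{c}-1$ non-anchor slots per active fairlet whose original cost is absorbed, I would prove that the total cost of $\MM'$ is at most $(5 - 3/\tilde{c}) \cdot OPT$. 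The color composition within each fairlet is preserved by this reshuffling because anchors and their paired edited rows stay within the original color-balanced fairlet.

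\textbf{Step 3 (FPT enumeration).} Since $\MM'$ uses only input rows of $\MM$ as the centers of its active clusters, and since there are at most $k$ active fairlets, the structural description of $\MM'$ is captured by a combinatorial signature of size bounded by a function of $k$: the multiset of anchor values serving as active centers, the pairing of edited rows to anchors, and the distribution of static rows across clusters. After collapsing identical input rows into weighted bundles, I would enumerate such signatures in FPT time and, for each one, verify feasibility and assemble the remaining static rows into fair clusters via a polynomial-time matching / network-flow computation respecting color ratios. Returning the best feasible signature produces a fair clustering within $(5 - 3/\tilde{c}) \cdot OPT$.

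\textbf{Main obstacle.} The technical heart of the argument is the tight analysis in Step 2. The difficulty is reconciling three simultaneous constraints: anchors must lie in the same fairlet as their paired edited row to preserve the color-ratio feasibility; fairlets in which \emph{every} row is edited admit no static anchor, forcing a more expensive matching between edited rows and degrading the straightforward triangle-inequality bound; and the final accounting must exploit the $\tilde{c}-1$ ``free slots'' per active fairlet to recover the $3/\tilde{c}$ savings. Balancing these three tensions is what forces the particular factor $5 - 3/\tilde{c}$, and I expect the bulk of the work to lie in setting up the matching so that these three accounting steps can be composed cleanly.
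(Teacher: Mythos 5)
Your plan correctly identifies the two high-level tasks — establish that some near-optimal solution uses only input rows as centers, then enumerate such solutions by color coding — and Step 3 is in the right spirit. However, your Step 2, which is the heart of the theorem, diverges from the paper's mechanism and has concrete gaps that I do not think can be patched without essentially reinventing the paper's argument.

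First, your matching within a fairlet is undefined whenever a fairlet has some static rows but strictly fewer static rows than edited rows: you cannot pair each edited row with a distinct static anchor, and your fallback (``pair edited rows with each other'') is only declared for fully-edited fairlets. Second, and more fundamentally, the source of the $\tilde{c}$-dependence in $(5-3/\tilde{c})$ is missing. In the paper this factor comes from a global pigeonhole: the edges of the optimal edit graph are arranged into a $\tilde{c}$-regular auxiliary bipartite graph (on fairlet-sized groups of \emph{in-edges} and \emph{out-edges} at the \emph{type} level, not rows within a cluster), decomposed into $\tilde{c}$ perfect matchings via K\H{o}nig's theorem, and the lightest matching $M_1$ has weight at most $k/\tilde{c}$; the final bound is $4k + (\tilde{c}-3)k'$ with $k' \le k/\tilde{c}$. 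Your ``amortization over the $\tilde{c}-1$ non-anchor slots'' is the right shape of intuition but gives no concrete bound — in an all-edited fairlet, making a single anchor the new center multiplies the anchor's edit cost by $\tilde{c}-1$ unless you argue the anchor is the lightest row, and pairing rows two-at-a-time instead splits the fairlet into $\tilde{c}/2$ clusters. Third, your construction does not control $\dr(\MM')\le r$: assigning each active fairlet its own anchor-based center can fragment one cluster into many types, and you never argue the surviving type count stays within budget. The paper's construction handles this explicitly (Phase 3) by redirecting in-edges of new types to an existing in-neighbor, showing the number of surviving types does not increase. Finally, you also need to preserve a structural property that lets the template enumeration in Step 3 have small search space — specifically that each $\MM$-fair type has at most one out-neighbor and no in-neighbor in the reduced edit graph, so that the ``unknown'' fair types in a candidate template form disjoint stars and can be matched independently; nothing in your Step 2 establishes this, and your signature enumeration would not be checkable in FPT time without it.
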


\subparagraph{Alternative Parameterizations.}
For our final contribution, we show that one can in fact have an exact fixed-parameter algorithm at least for \binfairedit\ under a different parameterization than the number $k$ of edits. Towards this, we consider a structural measure of the input matrix $\MM$, thus yielding exact algorithms---even for instances requiring a large number of edits---whose performance scales with how ``well-structured'' $\MM$ is. Our structural measure of choice here is the \emph{treewidth} $t$ of $\MM$, which has been successfully employed for other problems on binary matrices~\cite{EibenGKOS23,GanianHKOS22}
 but not yet in the editing setting. Essentially, $t$ measures how tree-like the interactions are between rows which share the underrepresented value (say $1$) on the same coordinate, and is obtained by measuring the \emph{treewidth} of the so-called \emph{primal graph} of $\MM$~\cite{GanianHKOS22}. 
 By developing a complex dynamic programming subroutine that not only carefully aggregates information from the previously processed parts of the input, but also anticipates the properties of the remainder of the instance, we obtain:

\begin{restatable}{theorem}{thmtw}
\label{thm:tw}
    \binfairedit is fixed-parameter tractable with respect to the treewidth of $\MM$.
\end{restatable}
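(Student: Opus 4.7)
The plan is to design a bottom-up dynamic program on a tree decomposition of the primal graph $\primal{\MM}$ of width $\bigO{t}$, which can be obtained in FPT time by standard algorithms and put into nice form. The key structural leverage from small treewidth is that cliques in $\primal{\MM}$ have size at most $t+1$; in the row-vertex interpretation of the primal graph used in prior work on binary matrices, this in particular bounds the number of rows having a $1$ in any fixed column by $t+1$, making $\MM$ locally sparse. This ``column-density'' bound is what will ultimately keep the DP state small.

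At each bag $X$, the DP state should encode all information about the partial solution on the already-processed portion of $\MM$ that is necessary to decide feasibility of any completion. Concretely, the state would record (i) the number of edits used so far; (ii) for each of the at most $2^{|X|+1} \le 2^{t+2}$ possible projections $p$ of a cluster center onto the boundary $X$, a summary of the ``open'' clusters whose partial center matches $p$, including per-cluster color-count information needed to later certify fairness; and (iii) a tally of the ``closed'' clusters whose fairness has already been verified. To keep the overall state bounded by $f(t) \cdot n^{\bigoh(1)}$, I would aggregate open clusters that are interchangeable for the purpose of future extensions---grouping them by projection together with their color profile relative to the fairlet ratio---storing counts of equivalence classes instead of individual cluster descriptors. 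The introduce, forget, and join transitions then extend the center projections to new boundary columns, migrate clusters from open to closed status (with an on-the-fly fairness check at the moment of closure), and---at join nodes---reconcile the open records of the two children.

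The main obstacle I anticipate is designing the state so that it is simultaneously sufficient to decide feasibility, bounded by $f(t) \cdot n^{\bigoh(1)}$, and updatable by the transitions. Fairness is a per-cluster global multiplicative constraint, so color counts of an open cluster cannot simply be forgotten, which makes the aggregation of open clusters subtle; handling it will require that the equivalence used at (ii) is coarse enough to bound the state but fine enough to preserve all information that fairness eventually tests. At join nodes, cluster records from the two children must be merged without double-counting contributions to the partial cost or to the color vectors, which is where careful bookkeeping is essential---the boundary projection is the natural means of identifying ``the same'' cluster across the subtree boundary. A further subtlety specific to the binary setting is that any sufficiently large cluster is, by a majority argument combined with the column-density bound coming from treewidth, forced to have center close to the zero vector; the DP design should exploit this to avoid enumerating all $2^{\Theta(m)}$ possible centers and instead restrict attention to a bounded collection of admissible projections per bag. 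Once these obstacles are overcome, the overall running time is $f(t) \cdot n^{\bigoh(1)}$ as required.
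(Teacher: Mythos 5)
Your overall route is the right one---a bottom-up DP on a nice tree decomposition of $\primal{\MM}$, leveraging the fact that treewidth $t$ caps the number of rows with a $1$ in any fixed column at $t+1$, combined with per-cluster color bookkeeping to enforce fairness---and you correctly identify the key majority-vote degeneracy for large clusters. However, there are several genuine gaps between what you sketch and a working algorithm.

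First, there is a type mismatch in your state description. In the paper's primal graph the bag $X$ is a set of \emph{rows}, while a cluster center is an $n$-dimensional $\{0,1\}$-vector indexed by \emph{columns}, so ``projections of a cluster center onto the boundary $X$'' is not a well-formed object, and your bound of $2^{|X|+1}$ projections does not correspond to anything you can actually enumerate. What you want to track instead is a partition $\calP$ of the bag rows according to which cluster of the hypothetical solution they lie in.

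Second, you are missing the ingredient that lets you avoid tracking centers entirely, namely the majority-vote observation: for any cluster, the optimal choice of center is obtained column-by-column by majority over the rows assigned to it. This is exactly what the paper isolates as a preliminary lemma. Once you have this, the DP state need record \emph{no} information about centers at all: given the partition of the bag rows into (at most $w$) groups, plus, for each group, the intended final cluster size and the per-color counts seen so far, the center (hence the per-column cost) is forced. Your plan to aggregate ``open clusters by projection together with color profile'' and store counts of equivalence classes is both more complicated and underspecified: in fact the number of clusters intersecting the current bag is simply at most $w$, each has final size at most $2w$ (by your own sparsity observation; larger clusters are forced to the all-zeros center and can be handled uniformly), and $\tilde{c} < 2w$ after the easy special case is disposed of, so per-part counters of size $O(w)$ already bound the state without any aggregation scheme. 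Clusters that have been fully forgotten can be summarized by three scalar tallies (rows committed, clusters opened, cost accrued) rather than tracked as ``closed cluster'' records.

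Third, you never pin down \emph{where} column costs are charged, which is essential to avoid double-counting at join nodes. The paper's observation is that for each column $j$ there is a unique forget node at which the first $1$-row of $j$ is removed from the bag while all remaining $1$-rows of $j$ are still present; at that node, and exactly then, the full cost of column $j$ over all clusters can be computed from the bag partition and the committed cluster sizes (since all rows of $j$ outside the bag have a $0$ there). Without an explicit scheme like this your ``on-the-fly fairness check at the moment of closure'' and your cost accounting are not guaranteed to be consistent, especially across joins.

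Finally, note that the trivial case you should handle first is $\tilde{c} \geq 2w$: then every cluster of a fair solution has at least $2w$ rows, so by the column-density bound every center is forced to all zeros, and the instance is a \yesinstance\ iff $k$ is at least the number of nonzero entries of $\MM$. This is what licenses assuming $\tilde{c} < 2w$ in the bound on your state space.
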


We remark that Theorem~\ref{thm:tw} also yields, as a special case, an alternative parameterization that can be used to solve the ``vanilla'' clustering problem studied, e.g., by Fomin, Golovach and Panolan~\cite{FominGP20}.

\section{Preliminaries}
\label{sec:prelims}
For a positive integer $i$, we write $[i]$ for the set $\set{1, 2, \dots , i}$.
For an $m \times n$ matrix $\MM$ (i.e., a matrix with $m$ rows and $n$ columns over some arbitrary domain), and for $i\in [m], j\in [n]$, $\MM[i, j]$ denotes the entry in the $i$-th row and $j$-th column of $\MM$. 
We write $\MM[i, \star]$ for the row-vector $(\MM[i, 1], \MM[i, 2], \dots , \MM[i, n])$,
and $\MM[\star, j]$ for the column-vector $(\MM[1, j], \MM[2, j], \dots , \MM[m, j])$. 
We call an $n$-dimensional vector a \emph{type} and refer to $\MM[i, \star]$ as the type of row $i$.
\iflong
We let $\type(\MM)$ denote the set of distinct types among the rows of $\MM$ and set $\dr(\MM)=|\type(\MM)|$, i.e., $\dr(\MM)$ is the number of distinct rows in $\MM$.
\fi 
\ifshort
 We let $\dr(\MM)$ denote the number of distinct types (i.e., the number of distinct rows) of $\MM$.
\fi
We denote by $p(\MM)$ the number of distinct entries in $\MM$.
 For two types $\tau_1, \tau_2,$ let $\dist{\tau_1}{\tau_2}$ be their Hamming distance.
For two $m\times n$ matrices $\MM,\MM'$, we write $\editcost{\MM}{\MM'}$ to denote the number of entries in which they differ.

We refer to the maximal sets of pairwise identical rows in a matrix as \emph{clusters}.
For an $m$-dimensional vector $\g$, and for $i \in [m]$, $\g[i]$ denotes the $i$-th entry of the vector.
For $c,m\in \N$, a vector $\g \in [c]^m$ is a \emph{\cmcoloring}. 
The input for our problem will formally include an $m\times n$ matrix $\MM$ and a \cmcoloring $\g$; we say that the $i$-th row of $\MM$ has color $\g[i]$.
We call $\MM$ \emph{fair (w.r.t.\ $\g$)} if all its clusters are fair, that is, they each witness the same color distribution as $\g$. 
Formally, for each color $i\in [c]$, each fair cluster~$S$ contains precisely $\frac{\abs{\g}_i}{m} \cdot |S|$ rows of color $i$, where $\abs{\g}_i$ denotes the number of entries of value $i$ in $\g$.
We call a type $\tau$ $\MM$-\emph{fair} if the cluster of this type in $\MM$ is empty or fair, and $\MM$-\emph{unfair} otherwise. 
For a \cmcoloring $\g$, let $\tilde{c} = m / \gcd(\abs{\g}_1,\ldots,\abs{\g}_{\tilde{c}})$
 denote the minimum size of any fair set of rows, where $\gcd$ denotes the greatest common divisor.
A \emph{fairlet} is a fair cluster of size $\tilde{c}$, and we also refer to $\tilde{c}$ as the \emph{fairlet size}.
We are now ready to define our problem of interest.\footnote{For purely complexity-theoretic reasons, here we formalize the problem in its decision variant. However, all algorithmic results obtained in this article are constructive and can also solve the corresponding optimization task.}

\vspace{0.2cm}
\begin{problem}
      \problemtitle{Fair Discrete Means Cluster Editing (\fairedit)}
      \probleminput{$m\times n$ matrix $\MM$, \cmcoloring $\g$, positive integers $k$ and $r$}
      \problemtask{Find a $m\times n$ matrix $\MM'$ that is fair for $\g$ such that $\editcost{\MM}{\MM'} \le k$ and $\dr(\MM') \le r$.}
\end{problem}



We refer to such a matrix $\MM'$ as a \emph{solution} to the instance. 
We let \binfairedit denote the problem in the binary domain, that is, 
$p(\MM) \leq 2$.
Note that for $c=1$, every $m\times n$ matrix $\MM'$ is fair and hence in this case
\fairedit reduces to the classical Matrix Editing problem without the fairness constraint.


%

Given a matrix $\MM'$ and an instance $\II = (\MM, \g, k, r)$ of \fairedit, we define the \emph{edit graph} $\egr{\MM'}_{\II}$ as the following edge-colored edge-labeled edge-weighted directed multigraph. 
The vertex set of $\egr{\MM'}_{\II}$ is the set of types occurring in $\MM$ or $\MM'$\iflong (i.e., $V(\egr{\MM'}_{\II}) = \type(\MM) \cup \type(\MM'))$\fi.
For each row index $t \in [m]$, there is exactly one edge of $\egr{\MM'}_{\II}$ from $\MM[t,\star]$ to $\MM'[t, \star]$; this edge has label $t \in [m]$, color $\gamma[t] \in [c]$, and weight $\dist{\MM[t,\star]}{\MM'[t, \star]}$. 
We drop the subscript from $\egr{\MM'}_{\II}$ when the instance is clear from context. An illustration of an edit graph is provided in \cref{fig:example_edit_graph}.

\begin{figure}[ht]
    \centering
    \includegraphics[height=.1\textwidth,page=1]{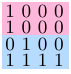}%
    \hspace{0.5cm}
    \includegraphics[height=.1\textwidth,page=2]{graphics/example_edit_graph.pdf}%
    \hspace{0.5cm}
    \includegraphics[height=.13\textwidth,page=3]{graphics/example_edit_graph.pdf}
    \caption{Example of an edit graph (right) describing the changes from a matrix $\MM$ (left) to a matrix $\MM'$ (center). Weights are printed in \textbf{bold} and labels are printed in (brackets). Solid and dashed edges represent changes in the two different fairness colors.}\label{fig:example_edit_graph}
\end{figure}

We call $\egr{\MM'}$ \emph{fair} if its edge set witnesses the same color distribution as $\g$. 
It is easy to see that $\egr{\MM'}$ is fair if and only if $\MM'$ is fair.
We say a type \emph{survives} in $\egr{\MM'}$, if it has an incoming edge in $\egr{\MM'}$.
Therefore, $\MM'$ is a solution of $\II$ if and only if $\egr{\MM'}$ is fair, at most $r$ vertices survive in $\egr{\MM'}$, and the total edge weight of $\egr{\MM'}$ is at most $k$.
Analogous to the fairness of clusters, we say that a set of edges is \emph{fair} if it witnesses the same color distribution as $\g$.

To simplify some proofs, we define the \emph{reduced edit graph} $R(\egr{\MM'})$ as the graph obtained from $\egr{\MM'}$ by removing self-loops and edge labels.
(Note that the reduced edit graph can still have multi-edges.)
Note that for a reduced edit graph $H$, there may be many matrices $\MM'$ such that $R(\egr{\MM'}) = H$; however, all of these are equivalent up to permutation of rows, and in particular either all or none are solutions.

\paragraph*{Parameterizations.}
We refer to the literature~\cite{CyganFKLMPPS15} for a formalization of parameterized complexity theory beyond the basic overview provided in Section~\ref{sec:intro}, including the notions of \emph{parameterized reductions}, \XP\ and \wone-\emph{hardness}. Note that the latter rules out fixed-parameter tractability under standard complexity assumptions.

If the domain of $\MM$ is $\set{0,1}$, the \emph{primal} graph $\primal{\MM}$ consists of vertices $v_1, \ldots, v_m$ and $v_h v_i\in E(\primal{\MM})$ if and only if there is a column $j$ such that $\MM[h,j] \neq 0$ and $\MM[i,j] \neq 0$.
In other words, we can construct $\primal{\MM}$ by first adding a vertex for each row and then, for each column of $\MM$, adding a clique among the vertices corresponding to the rows with a nonzero entry in that column.
\ifshort
The \emph{treewidth} $\tw(G)$ of a graph $G$ is a well-established measure of how ``tree-like'' it is; for instance, trees have a treewidth of $1$ while an $n$-vertex complete graph has treewidth $n-1$. For a definition of treewidth, including the notions of \emph{nice tree decompositions} and \emph{bags} which are used in the formal proof of Theorem~\ref{thm:tw}, we refer to the literature~\cite[Subsection 7.2]{CyganFKLMPPS15}.
\fi
\iflong
A \emph{nice tree decomposition} of an undirected graph $G = (V,E)$ is a pair $(\mathtt{T}, \chi)$, where $\mathtt{T}$ is a tree (whose vertices are called \emph{nodes}) rooted at a node $b_0$ and $\chi$ is a function that assigns each node $b$ a set $\chi(b) \subseteq V$ such that the following hold:
\begin{itemize}
	\item For every ${u,v} \in E$, there is a node $b$ such that $u,v \in \chi(b)$.
	\item For every vertex $v \in V$, the set of nodes $b$ satisfying $v \in \chi(b)$ forms a subtree of $\mathtt{T}$.
	\item $|\chi(\ell)| = 0$ for every leaf $\ell$ of $\mathtt{T}$ and $|\chi(b_0)| = 0$.
	\item There are only three kinds of non-leaf nodes in $\mathtt{T}$:
	\begin{itemize}
		\item \textsf{introduce} node: a node $b$ with exactly one child $b'$ such that $\chi(b) = \chi(b') \cup \set{v}$ for some vertex $v \notin \chi(b')$.
		\item \textsf{forget} node: a node $t$ with exactly one child $b'$ such that $\chi(b) = \chi(b') \setminus \set{v}$ for some vertex $v \in \chi(b')$.
		\item \textsf{join} node: a node $t$ with two children $b_1, b_2$ such that $\chi(b) = \chi(b_1) = \chi(b_2)$.
	\end{itemize}
\end{itemize}

We call each set $\chi(t)$ a \emph{bag}, and we use $\descendants(t)$ to denote the set of all vertices of $G$ which occur in the bag of $t$ or some descendant of $t$. The width of a nice tree decomposition $(\mathtt{T}, \chi)$ is the size of the largest bag $\chi(t)$ minus 1, and the \emph{treewidth} of $G$ is the minimum width of a nice tree decomposition of $G$.\fi 
\ We let the \emph{treewidth} of $\MM$ denote the treewidth of its primal graph, i.e., $\tw(\MM)= \tw(\primal{\MM})$.

\section{Fixed-Parameter Intractability of \binfairedit}
\label{sec:hardness}
We first note that the \nphardness of \textsc{Binary Means Clustering} directly transfers to \fairedit:

\begin{fact*}[\cite{Feige14b}]
    \binfairedit is \nphard even if $c=\tilde{c}=1$ and $r=2$. 
\end{fact*}

As our first contribution, we establish that---in contrast to its fairness-oblivious variant---our problem of interest does not admit a fixed-parameter algorithm under standard complexity assumptions.

\thmwhard*
\ifshort
\begin{proofsketch}
      We reduce from the \textsc{Multi-colored Clique} problem which asks, given a properly vertex-colored graph $G=(V = \{v_1,\ldots,v_{|V|}\}, E =\{e_1,\ldots,e_{|E|}\})$ with $q$ colors $[q]$, whether $G$ contains a clique of size $q$. Such a clique must contain precisely one vertex of each color, and the problem is well-known to be \wone-hard w.r.t.\ $q$~\cite{CyganFKLMPPS15}.
      	For each vertex $v$, let $\col(v)$ refer to its color.      	
      Given an instance of the \textsc{Multi-colored Clique} problem with $q\ge 4$, we construct a \binfairedit instance $(\MM, \g, \dr(\MM), k)$ as follows.
    We begin by carefully choosing a budget $k \in \bigoh(q^5)$, and produce a matrix $\MM$ with
     $m\in \bigoh(q^8 + q^6(|V| + |E|))$ rows 
     and $\bigoh(q^4+ |V|)$ columns along with a \coloring{2q}{m} $\g$ by following the description provided below.

       We identify the colors of the \textsc{Multi-colored Clique} instance with the first $q$ row colors for \binfairedit.  
%
%
	Next we list the types used in $\MM$.
	For every color $i \in [q]$, we have a set of $(q+1)$ types $\calV_{it}$ for $t \in [q] \setminus \set{i}$.
	For every pairs of colors $i,j \in [q], i < j$, we have a set of $(2q-2)$ types $\calE_{ijt}$ for $t \in [2q] \setminus \set{i,j}$.
	For each vertex $v_a$ for $a \in [|V|]$ with color $i$, we have one type $V_a$.
	Lastly, for each edge $v_a v_b$ for some $a, b \in [|V|], a < b$ with $\col(v_a) = i, \col(v_b) = j$, we have one type $E_{ab}$.
	We carefully design the types such that the pairwise Hamming distances are as listed in  \cref{table:w1_fixed}.
	
		\begin{table}[ht]
            \centering
        \begin{tabular}{r c c c c }
            \toprule
            & $\calV_{it'}$ & $\calE_{ijt'}$ & $V_{a'}$ & $E_{a'b'}$ \\
            \midrule
            $\calV_{it}$ & $4$ & $y + 4$ & $3$ & $y+4$ \\
            $\calE_{ijt}$  & $-$ & $4$ & $y+3$ & $4$ \\
            $V_{a}$  & $-$ & $-$ & $2$ & $y+1$ \\
            $E_{ab}$ & $-$ & $-$ & $-$ & $2$ \\
            \bottomrule
            \end{tabular}
        \caption{Minimum Hamming distances between distinct types\ifshort, for a carefully selected $y \in \Theta(q^3)$\fi. These are only achieved if  $\set{\col(a),\col(b)}= \set{i,j}$, $\col(a)=\col(a')$, and $\col(b)=\col(b')$; distances between other pairs of types are too large to be relevant for a hypothetical solution. \iflong Symmetric cases are marked by $-$. \fi}
        \label{table:w1_fixed}
	\end{table}
    We then create rows of each type and color such that each of the types $V_a$ and $E_{ab}$ is $\MM$-fair; at the same time, each type $\calV_{it}$ has an ``excess'' of one row of color $t$ and each type $\calE_{ijt}$ has a ``deficit'' of one row of color $t$.
      By the choices of distances between types and the ranges for $t$, we can show that the only way to make every cluster fair within the prescribed budget $k$ is to do the following.
      Each $\calV_{it}$ sends its excess row to some type $V_{h(i)}$ such that $\col(h(i)) = i$ (representing a vertex in the clique).
      Further, for each pair of colors $i<j$ in $[q]$, $E_{h(i)h(j)}$ (representing an edge in the clique) sends a row of color $j$ to $V_{h(i)}$, a row of color $i$ to $V_{h(j)}$, and each remaining row to some $\calE_{ijt}$.
      Note that for this to yield a solution, $E_{h(i)h(j)}$ has to exist for every pair of colors $i, j$ in $[q]$; that is, $v_{h(1)}, \dots, v_{h(q)}$ must form a (multi-colored) clique.
\end{proofsketch}
\fi
\iflong
\begin{proof}
      We reduce from the \textsc{Multi-colored Clique} problem, which asks, given a properly vertex-colored graph $G=(V = \{v_1,\ldots,v_{|V|}\}, E =\{e_1,\ldots,e_{|E|}\})$ with $q$ colors $[q]$, to find a multi-colored clique of size $q$ and is \wonehard with respect to $q$.
      	For each vertex $v$, we denote its color by $\col(v)$.
      Given an instance of the \textsc{Multi-colored Clique} problem with $q\ge 4$, we construct a \binfairedit instance $(\MM, \g, \dr(\MM), k)$ as follows.
      Let 
    \begin{align*}
        y \coloneqq 3q(q+1)+ 4\tbinom{q}{2}(2q-2) \text{ and }  
        k  \coloneqq 2\tbinom{q}{2} (y+1)+ y.
    \end{align*}

    
       Let $\MM$ be an $m\times n$ matrix with $m=q(2kq+1)+\binom{q}{2}(2kq-1) + 2kq|V| + 2kq|E|$ rows and $n=qy + 8q + |V|$ columns.
       Let $\g$ be a \coloring{2q}{m} matching the colors of the rows as described in the following.
       We identify the colors of the \textsc{Multi-colored Clique} instance with the first $q$ row colors for \binfairedit.  
       We consider the columns to be divided into four groups: 
       the first groups of $qy$ columns represent the $q$ colors used in the graph, with $y$ columns per color;
       the next two groups of each $4q$ columns each represent the whole set of $2q$ colors, each with $2$ columns per color; and
       the last block of $|V|$ columns represent vertices, with one column per vertex.


	We now describe the types used in $\MM$. Blocks and positions refer to the relative position within the respective group.
%
	For every color $i \in [q]$, we have a set of $(q+1)$ types $\calV_{it}$ for $t \in \set{i} \cup \set{q+1,\ldots,2q}$:
	\begin{align*}
        \calV_{it} = &
            (\underbrace{0\ldots0,\overbracket{1 \ldots1}^\labeltop{$i^{\text{th}}$ block of $y$ entries},0\ldots0}_{\text{$qy$ entries}},
            \underbrace{0\ldots0,\overbracket{1,1}^\labeltop{pos $2t$-$1$ and $2t$},0\ldots,0}_{\text{$4q$ entries}}, \\
            & \underbrace{0\ldots0}_{\text{$4q$ entries}},
            \underbrace{0\ldots0}_{\text{$|V|$ entries}}).
	\end{align*}
	For every pairs of colors $i,j \in [q], i < j$, we have a set of $(2q-2)$ types $\calE_{ijt}$ for $t \in [2q] \setminus \set{i,j}$:
	\begin{align*}
        \calE_{ijt} = &
            (\underbrace{0\ldots0,\overbracket{1 \ldots 1}^\labeltop{$i^{\text{th}}$ block of $y$ entries},0\ldots0,\overbracket{1 \ldots 1}^\labeltop{$j^{\text{th}}$ block},0\ldots0}_{\text{$qy$ entries}},
            \underbrace{0\ldots0}_{\text{$4q$ entries}}, \\
            & \underbrace{0\ldots0,\overbracket{1,1}^\labeltop{pos $2t$-$1$ and $2t$},0\ldots,0}_{\text{$4q$ entries}},
            \underbrace{0\ldots0}_{\text{$|V|$ entries}}).
	\end{align*}
	For each vertex $v_a$ for $a \in [|V|]$ with color $i$, we have one type $V_a$ as follows: 
	\[
        V_{a} = 
            (\underbrace{0\ldots0,\overbracket{1 \ldots 1}^\labeltop{$i^{\text{th}}$ block of $y$ entries},0\ldots0}_{\text{$qy$ entries}},
            \underbrace{0\ldots0}_{\text{$4q$ entries}},
            \underbrace{0\ldots0}_{\text{$4q$ entries}},
            \underbrace{0\ldots0,\overbracket{1}^\labeltop{pos $a$},0,\ldots0}_{\text{$|V|$ entries}}).
	\]
	Lastly, for each edge $v_a v_b$ for some $a, b \in |[V]|, a < b$ with $\col(v_a) = i, \col(v_b) = j$, we have one type $E_{ab}$ as follows:
	\begin{align*}
        E_{ab} = &
            (\underbrace{0\ldots0,\overbracket{1 \ldots 1}^\labeltop{$i^{\text{th}}$ block of $y$ entries},0\ldots0,\overbracket{1 \ldots 1}^\labeltop{$j^{\text{th}}$ block},0\ldots0}_{\text{$qy$ entries}},
            \underbrace{0\ldots0}_{\text{$4q$ entries}},
            \underbrace{0\ldots0}_{\text{$4q$ entries}}, \\
            & \underbrace{0\ldots0, \overbracket{1}^\labeltop{pos $a$}, 0\ldots0, \overbracket{1}^\labeltop{pos $b$}, 0\ldots0}_{\text{$|V|$ entries}}).
	\end{align*}
	
	The pairwise Hamming distances between all types are listed in \cref{table:w1_fixed}.

		\begin{table}[ht]
        \caption{Minimum Hamming distances between distinct types\ifshort, for a carefully selected $y \in \Theta(q^3)$\fi. These are only achieved if  $\set{\col(a),\col(b)}= \set{i,j}$, $\col(a)=\col(a')$, and $\col(b)=\col(b')$; distances between other pairs of types are too large to be relevant for a hypothetical solution. \iflong Symmetric cases are marked by $-$. \fi}
            \label{table:w1_fixed}
            \centering
        \begin{tabular}{r c c c c }
            \toprule
            & $\calV_{it'}$ & $\calE_{ijt'}$ & $V_{a'}$ & $E_{a'b'}$ \\
            \midrule
            $\calV_{it}$ & $4$ & $y + 4$ & $3$ & $y+4$ \\
            $\calE_{ijt}$  & $-$ & $4$ & $y+3$ & $4$ \\
            $V_{a}$  & $-$ & $-$ & $2$ & $y+1$ \\
            $E_{ab}$ & $-$ & $-$ & $-$ & $2$ \\
            \bottomrule
            \end{tabular}
	\end{table}	

      For each color $i\in [q]$ and $t \in \set{i} \cup \set{q+1,\ldots,2q}$, we create $2kq+1$ rows of type $\calV_{it}$, with $k+1$ rows of color $t$ and $k$ row each of the other $(2q-1)$ colors.
      For every pair of colors $i,j\in [q], i < j$ and $t \in [2q] \setminus \set{i,j}$, we create $2kq-1$ rows of type $\calE_{ijt}$, with $k-1$ rows of color $t$ and $k$ rows of each color other than $t$.
      For every vertex $v_a \in V$, we create $2kq$ rows of type $V_a$, with $k$ rows of each color.
      For every edge $v_a v_b \in E$, create $2kq$ rows of type $E_{ab}$, with $k$ rows of each color.
      Note that the number of occurrences of each color in $\g$ is the same, so $\tilde{c} = c = 2q$. 
      Observe that the size of the new instance is polynomial in the size of $G$ and $\tilde{c}+k$ is bounded by a computable function of $q$. 
      The statement follows as we now show that $(\MM, \g, \dr(\MM), k)$ is a \yesinstance for \binfairedit if and only if $G$ is a \yesinstance for \textsc{Multi-colored Clique}.

      Assume there is a multi-colored clique of vertices $u_1, \ldots, u_q$ in $V$ and let vertex $u_i$ have color $i$ for $i\in [q]$. 
      Let $\beta\colon [q]\rightarrow [|V|]$ be such that for each $i\in[q]$ we have $u_i = v_{\beta(i)}$. We describe a matrix $\MM'$ by defining $k$ edits made to $\MM$ to obtain $\MM'$.
      For each $i\in [q]$ and $t \in \set{i} \cup \set{q+1,\ldots,2q}$, edit a row of type $\calV_{it}$ and color $t$ into the type $V_{\beta(i)}$ representing vertex $u_i$, yielding $3$ edits per row. 
      For each edge $u_i u_j$ of the clique with $i<j$, we edit $2q$ rows of type $E_{\beta(i)\beta(j)}$ as follows.
      One row of color $i$ is edited into $V_{\beta(j)}$ and one row of color $j$ is edited into $V_{\beta(i)}$, requiring $y+1$ edits each.
      For every $t \in [2q] \setminus \set{i,j}$, we edit a row of type $E_{\beta(i)\beta(j)}$ and color $t$ into type $\calE_{ijt}$, each using $4$ edits.
      The total number of edits is then precisely $k$. 
	It is easy to check that the resulting matrix $\MM'$ is fair and witnesses $(\MM, \g, k)$ to be a \yesinstance.

	For the other direction, assume $(\MM, \g, r=\dr(\MM), k)$ is a \yesinstance of \binfairedit witnessed by a matrix $\MM'$. 
    Note that as $r=\dr(\MM)$ and each cluster in $\MM$ contains more than $k$ rows, we have that $\type(\MM) = \type(\MM')$. 
	We partition the rows of $\MM$ based on their entries in the first $qy$ columns: for $i,j\in [q], i<j$ let $\calR_i$ be the set of rows where only the $i^{\text{th}}$ block of entries has value 1 and let $\calR_{i,j}$ consist of all rows where only the $i^{\text{th}}$ and $j^{\text{th}}$ block of entries have value 1.
	For all applicable $i$ and $j$, let $T_i$ and $T_{ij}$ be the sets of types that appear in $\calR_i$ and $\calR_{ij}$, respectively. 
	For each row, we call its types in $\MM$ and $\MM'$ its \emph{relevant types}.
	We call a row whose relevant types are in different sets an \emph{interset row}, and we call a row whose relevant types are different but in the same set an \emph{intraset row}.    

	Observe that each interrow costs at least $y+1$ edits, and this number is only achieved when the relevant types of the rows are of the form $E_{ab}$ and $V_a$ (or $V_b$).
	For each color $i \in [q]$, note that the set $\calR_i$ has excess of one row of each color $t$ in $\set{i} \cup \set{q+1,\ldots, 2q}$.
	Therefore, in order for $\MM'$ to be fair, either (i) at least $q+1$ rows in $\calR_i$ obtain a type outside of $T_i$ in $\MM'$, or (ii) at least $q-1$ rows outside of $\calR_i$ obtain a type in $T_i$ in $\MM'$.
	Since each interrow costs at least $y+1$ edits, for each $\calR_i$, the lowest number of edits to make the number of rows per color the same is $(y+1)(q+1)$ in case (i) and $(y+1)(q-1)$ in case (ii).
Similarly, for each pair of colors $i$ and $j$, the set $\calR_{ij}$ has excess of one row of color $i$ and one row of color $j$.
	Hence, either (iii) at least $2q-2$ rows outside of $\calR_{ij}$ obtain a type in $T_{ij}$ in $\MM'$, or (iv) at least two rows in $\calR_{ij}$ obtain a type outside of $T_{ij}$.
    Suppose in every case, the strictly cheaper option would occur (cases (ii) and (iv)). 
    Then the total number of edits in interrows is at least 
    \[\tfrac{1}{2}(y+1)(q(q-1) + 2\tbinom{q}{2}) = 2\tbinom{q}{2} (y+1),\]
    where we divide by $2$ since potentially an interrow could be accounted for twice (in case (ii) and case (iv) when it is edited from a type in some $T_i$ to a type in some $T_{i,j}$).
    As $k < (2\tbinom{q}{2}+1) (y+1)$ and using any other case than (ii) or (iv) at any point increases the number of edits by more than $(y+1)$, we have that exactly $q-1$ rows outside of $\calR_i$ obtain a type in $T_i$ in $\MM'$, and exactly one row each of color $i$ and $j$ in $\calR_{ij}$ obtains a type outside of $T_{ij}$ for each $i,j\in[q], i<j$.

	
	Next, we count the number of edits required for intrarows.
	Consider some $i \in [q]$.
	Recall that $\calR_{i}$ receives exactly $q-1$ interrows from outside as argued above, and the unfair clusters in $\calR_{i}$ are the $q+1$ clusters of type $\calV_{it}$ in $\MM$ for $t \in \set{i} \cup \set{q+1,\ldots, 2q}$, where each such cluster has one extra row of color $t$.
	It follows that the $q-1$ received interrows have colors $[q] \setminus \set{i}$, one row for each color.
	If the extra rows in $\calR_{i}$ and the interrows have the same type $\tau$ in $\MM'$, then the smallest number of edits required for the intrarows in $\calR_i$ is $3(q+1)$, which is only achieved if $\tau$ is some $V_a$ with $\col(v_a) = i$ due to $q\ge 4$ and the Hamming distances between the types.
	Otherwise (i.e., if these rows have at least two types in $\MM'$), then there must be at least $3q-1$ intrarows in $\calR_{i}$ to make all clusters fair, so the smallest number of edits required would at least $3(3q-1) > 3(q+1)$.

	With a similar argument, we can see that for $i,j \in [q]$ with $i < j$, the number of edits required for the intrarows in $\calR_{ij}$ to make the clusters fair is at least $4(2q-2)$.
	This number is achieved, when there is some type $E_{ab}$ with $\col(a) = i$ and $\col(b) = j$ such that $2q-2$ intrarows and two interrows have type $E_{ab}$ in $\MM$. 
	
	The preceding two paragraphs imply that the minimum total number of edits required for intrarows is $3q(q+1) + 4\binom{q}{2}(2q-2) = y$.
	As argued above, the number of edits required for the interrows is at least $2\tbinom{q}{2}(y+1)$.
	Since these two numbers add up to exactly to $k$, all the conditions to achieve each minimum number of edits have to be met.
	In summary, these are
    \begin{enumerate}[label=(\alph*)]
		\item The relevant types of each interrow are of the form $V_a$ and $E_{a'b'}$ such that $a \in \set{a',b'}$.
		\item There exists a mapping $h : [q] \to [|V|]$ such that for $i \in [q]$, $\col(v_{h(i)}) = i$, and $V_{h(i)}$ is the relevant type of $q+1$ intrarows and $q-1$ interrows, and for all types in $T_i$ to be fair, these interrows must have color $[q] \setminus \{i\}$ with one row for each color.
		No interrow has a relevant type $V_a$ with $a\notin h([q])$. 
		\item For $i,j \in [q]$, $i < j$, there exist $a, b \in [|V|]$ with $a < b$, $\col(v_a) = i$, and $\col(v_b) = j$ such that $E_{ab}$ is the relevant type of $2q-2$ intrarows and two interrows.
         For all types in $T_{ij}$ to be fair, the colors of these interrows are $i$ and $j$.
	\end{enumerate}
    Consider any $i,j \in [q]$, $i < j$, let $E_{ab}$ as defined in (c), and let $\tau_i, \tau_j$ be the other relevant type of the interrows of color $i$ and $j$ in (c), respectively.
    By (a), $\tau_i, \tau_j \in \set{V_a, V_b}$ and, by (b), $\tau_i, \tau_j \in \set{V_{h(\ell)} \mid \ell\in [q]}$. 
    As $V_a$ and $V_b$ have color $i$ and $j$, respectively, this implies $\tau_i, \tau_j \in \set{V_{h(i)},V_{h(j)}}$. As by (b), $V_{h(i)}$ is not a relevant type for a row of color $i$ and the same holds for $V_{h(j)}$ and color $j$, we have $V_a = V_{h(j)} = \tau_i$ and $V_b = V_{h(i)} = \tau_j$. Thus $V_{h(i)}$ and $V_{h(j)}$ are connected by the edge $ab$ and hence $v_{h(1)}, \dots, v_{h(q)}$ induce a clique. By (b), this clique is multi-colored.
\end{proof}
\fi

\section{Fixed-Parameter Algorithms for \fairedit\ via Additional Constraints}
\label{sec:additionalc}

Our aim in this section is to identify constraints under which we can circumvent the lower bounds in Section~\ref{sec:hardness}. By the end of the section, we will in fact have obtained a comprehensive classification of the problem's complexity under the considered parameterizations.

We begin by noting that \fairedit admits a straightforward $\bigoh((mnp(\MM))^k)$-time algorithm---indeed, one can exhaustively branch over precisely which cells are edited into which value of $\MM$. 

\begin{observation}
    \fairedit is in \XP\ when parameterized by $k$.    
\end{observation}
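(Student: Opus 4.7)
The plan is to flesh out the brute-force branching algorithm sketched in the preceding sentence. Given the instance $(\MM,\g,k,r)$, I would enumerate candidate solutions $\MM'$ by branching up to $k$ times; in each step I would pick one of the $mn$ cells of $\MM$ and overwrite it with one of the (at most) $p(\MM)$ values already appearing in $\MM$. This generates at most $(mn\cdot p(\MM))^k$ candidate matrices $\MM'$, and for each I would verify in polynomial time that $\MM'$ is fair w.r.t.\ $\g$, $\dr(\MM')\le r$, and $\editcost{\MM}{\MM'}\le k$. Outputting any candidate that passes all three checks (or declaring a \noinstance if none does) yields an algorithm running in $\bigO{(mnp(\MM))^k}\cdot\mathrm{poly}(m,n)$ time, which is \XP\ in $k$.

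The main (indeed, essentially the only) nontrivial step in the argument is justifying that no solutions are lost by restricting overwrites to values already appearing in $\MM$; I would handle this via a short local exchange argument. If some optimal $\MM'$ uses at a position $(i,j)$ a value $v$ not appearing anywhere in $\MM$, let $S$ be the cluster of $\MM'$ containing row $i$ and let $\vec{w}$ be the common type of $S$, so that $\vec{w}[j]=v$. Replacing $\vec{w}[j]$ by the most frequent value of $\MM[\cdot,j]$ among the rows of $S$---which is, by definition, a value of $\MM$---can only decrease the column-$j$ contribution to $\editcost{\MM}{\MM'}$ (since no row of $S$ had $\MM[\cdot,j]=v$), while leaving $\MM'$'s fairness and distinct-row count unchanged. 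Iterating this exchange produces a solution of no greater cost that uses only values from $\MM$, validating the branching domain and completing the argument.
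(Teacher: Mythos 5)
Your proposal is correct and matches the paper's approach: the paper also proves this observation by the brute-force branching over the $\leq k$ edited cells and their replacement values drawn from the entries of $\MM$, yielding an $\bigO{(mnp(\MM))^k}\cdot\mathrm{poly}(m,n)$ algorithm. The local-exchange argument you add to justify restricting to values appearing in $\MM$ is a clean formalization of something the paper leaves implicit (it mirrors the majority-vote structure stated later in Lemma~\ref{lem:nice_solution}) and is a reasonable strengthening of the one-line justification.
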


Next, we identify the first constraint that yields fixed-parameter tractability: in particular, a fairlet size that is larger than the parameter $k$.

\thmklec*
\begin{proof}
    Every fair cluster contains at least $\tilde{c}$ rows.
	Hence, due to the fairness constraint, when $\tilde{c} > k$ we cannot create any new type in $\MM'$.
	Further, for an $\MM$-fair type, in order for it to be $\MM'$-fair, we have to remove or add either zero or at least $\tilde{c}$ rows, requiring at least $\tilde{c}$ edits.
	Therefore, as $\tilde{c} > k$, no row can be edited from or into an $\MM$-fair type.
	
	With $k$ edits, the clusters of at most $2k$ types change between an input matrix $\MM$ and a solution matrix $\MM'$.
    Thus, if more than $2k$ types are $\MM$-unfair, we can correctly reject.
    For every $\MM$-unfair type $\tau$ with a cluster $S$, note that there are at most two sizes in $\set{\tilde{c}i \mid i\in \N, |S|-k\le \tilde{c}i \le |S|+k}$ that the cluster of type $\tau$ can have in a solution $\MM'$.
    We test each of the at most $2^{2k}$ branches for all choices across all $\MM$-unfair types.
    In particular, we attempt to construct a fair reduced edit graph such that there is no new type, all $\MM$-fair types are isolated (i.e., they are incident to only self-loops in the ordinary edit graph), and the total edge weight is at most $k$.
    We do so by having colored ``half-edges'' at each $\MM$-unfair type, such that the number of incoming (or outgoing) half-edges of each color is the number of rows of that color to be added to (or removed from) its cluster to obtain a fair cluster of the branched size.
    Then we match each outgoing half-edge with an incoming half-edge of the same color and assign the combined edge the Hamming distance between the corresponding types as weight.    
	By fixing an arbitrary order of the outgoing half-edges, we can view a matching of these half-edges as a permutation of the incoming half-edges---yielding at most $\bigoh(k!)$ such matchings and allowing us to exhaustively branch over these.
	After that, for each graph, we compute its total edge weight, which can be done in $\bigO{kn}$ time.
       Clearly, the \fairedit instance has a solution if and only if at least one of the branches succeeds.
       In total, this algorithm runs in time $\bigO{2^{2k}k!\cdot kn +m}$.
\end{proof}

\iflong
\begin{corollary}
    \fairedit is fixed-parameter tractable with respect to $k$ if there is a computable function $f$ such that $\tilde{c} \in \Omega(f(mn))$.
\end{corollary}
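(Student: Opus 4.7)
The plan is a dichotomy on whether $k < \tilde{c}$ or $k \geq \tilde{c}$, together with a brute-force fallback in the latter case. For the corollary to be non-vacuous and consistent with Theorem~\ref{thm:whard_fixed}, I read the hypothesis via a \emph{non-decreasing and unbounded} computable $f$; the assumption then unpacks to the existence of constants $C, N > 0$ with $\tilde{c} \geq C \cdot f(mn)$ whenever $mn \geq N$.

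First I would handle the ``large fairlet'' case $k < \tilde{c}$ by invoking Theorem~\ref{thm:k_le_c} directly, which already yields an FPT algorithm parameterized by $k$. For the complementary case $k \geq \tilde{c}$, combining $k \geq \tilde{c}$ with $\tilde{c} \geq C \cdot f(mn)$ gives $f(mn) \leq k/C$ for every input satisfying $mn \geq N$ (the finitely many inputs with $mn < N$ can be resolved by any exhaustive procedure in constant time, since $N$ is a fixed constant depending only on $f$). Because $f$ is computable, non-decreasing, and unbounded, I can, given $k$, iteratively evaluate $f(1), f(2), \ldots$ to locate the smallest index $g(k) \in \N$ such that $f(g(k)+1) > k/C$; this terminates in finite time and defines a computable bound $mn \leq g(k)$.

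With $mn$ now bounded by a computable function of $k$, the entire input has size bounded by a function of $k$, so the \XP\ algorithm from the observation preceding Theorem~\ref{thm:k_le_c} (running in $\bigoh((mnp(\MM))^k)$ time by brute-forcing over which $k$ cells to edit and the new values assigned to them) collapses to an FPT running time with respect to $k$. The only mildly delicate step I anticipate is turning the asymptotic $\Omega$-hypothesis into a bona fide computable bound $g(k)$ on $mn$; this is precisely where the monotonicity and unboundedness assumptions implicit in the statement become essential, and once they are in place the rest of the argument is routine.
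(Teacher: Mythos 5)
Your proof is correct and follows essentially the same dichotomy as the paper's: handle $\tilde{c} > k$ via Theorem~\ref{thm:k_le_c}, and otherwise observe that $k \geq \tilde{c} \in \Omega(f(mn))$ bounds $mn$ by a computable function of $k$, collapsing the trivial $\bigoh((mnp(\MM))^k)$ brute force to an FPT running time. You are somewhat more explicit than the paper about the regularity one must implicitly read into ``$\tilde{c} \in \Omega(f(mn))$'' (monotonicity and unboundedness of $f$) in order to effectively invert the bound, but this is a matter of filling in an unstated step rather than a different argument.
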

\begin{proof}
    If $\tilde{c} > k$, then we can use the algorithm in \cref{thm:k_le_c}.
    Otherwise, we have $k \in \Omega(f(mn))$. 
    In this case, a trivial brute force algorithm testing all possible subsets of entries that are edited runs in time $\bigoh((mnp(\MM))^k)$ and hence in time $g(k)$ for some computable function $g$.
\end{proof}
\fi

Next, we show that the problem is fixed-parameter tractable with respect to $k+r$. For this, we first establish that we can assume that every \yesinstance has at least one well-structured solution\ifshort, which follows by observing that the center for any cluster can be determined by column-wise majority votes\fi.

\begin{lemma}
\label{lem:nice_solution}
    Every \yesinstance $(\MM, \g, k, r)$ of \fairedit
    is witnessed by a matrix $\MM^*$ such that for every cluster $S = \set{s_1,\ldots, s_{|S|}}$ in $\MM^*$ and for each $j\in [n]$, $\MM^*[s_1, j]$ corresponds to the winner of a majority vote of $\set{\MM[s_i, j]\mid i\in [|S|]}$, breaking ties arbitrarily. 
\end{lemma}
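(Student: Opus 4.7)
The plan is to start from an arbitrary witnessing solution $\MM'$ and "re-optimize" it cluster-by-cluster and column-by-column without changing the partition of rows into clusters. Let $\MM'$ be any solution to $(\MM,\g,k,r)$ and let $\calP = \{S_1,\ldots,S_p\}$ be the partition of $[m]$ induced by the clusters of $\MM'$ (so $p = \dr(\MM') \le r$ and each $S_\ell$ is fair w.r.t.\ $\g$). I would then define $\MM^*$ to have the same dimensions as $\MM$, and for every $S_\ell \in \calP$, every row $s \in S_\ell$, and every column $j \in [n]$, set $\MM^*[s,j]$ to a winner of a majority vote among $\{\MM[s',j] \mid s' \in S_\ell\}$ (ties broken arbitrarily but consistently within each pair $(\ell,j)$).

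The key step is then to verify the three required properties of $\MM^*$. First, by construction every two rows in the same $S_\ell$ are identical in $\MM^*$, so the cluster partition of $\MM^*$ is a coarsening of $\calP$; in particular $\dr(\MM^*) \le p \le r$. Second, for fairness it suffices to observe that each cluster of $\MM^*$ is a disjoint union of fair sets $S_\ell$ (each $S_\ell$ is fair because it is a cluster of the fair matrix $\MM'$), and a disjoint union of sets with color distribution proportional to $\g$ is again proportional to $\g$. Third, for the edit budget, I would argue column-wise within each $S_\ell$: in column $j$, $\MM'$ replaces the entries $\{\MM[s,j] \mid s\in S_\ell\}$ with a single constant value, incurring exactly $|S_\ell|$ minus the multiplicity of that value; choosing the majority winner instead minimizes this number over all possible constants. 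Summing over $\ell \in [p]$ and $j \in [n]$ yields $\editcost{\MM}{\MM^*} \le \editcost{\MM}{\MM'} \le k$. The property stated in the lemma then holds by the very definition of $\MM^*$.

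I do not expect a genuine obstacle. The only point that requires a moment of care is that the clustering in $\MM^*$ need not coincide with $\calP$---two parts $S_\ell, S_{\ell'}$ may collapse into one cluster if their majority vectors coincide---but this only improves the bound on $\dr(\MM^*)$ and preserves fairness because the union of fair sets is fair. All remaining steps are straightforward verifications, so the proof is essentially an exchange argument localized to each (cluster, column) pair.
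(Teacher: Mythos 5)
Your proof takes essentially the same route as the paper's: start from any witness $\MM'$, keep its partition into clusters, and overwrite each cluster's entries column-by-column with the majority value from $\MM$; the cost comparison, the $\dr$ bound, and the fairness argument (union of fair sets is fair) are all correct. The paper merely performs this exchange one (cluster, column) pair at a time and repeats until stable, whereas you do it in a single pass; both are valid.

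There is, however, a small unaddressed step. You assert that ``the property stated in the lemma then holds by the very definition of $\MM^*$,'' but the lemma's property quantifies over the clusters \emph{of $\MM^*$}, which you yourself observe may be strict unions of several parts $S_\ell, S_{\ell'}$ of $\calP$. Your construction computes majorities per $S_\ell$, not per merged cluster, and while you correctly note that merging only helps $\dr$ and fairness, you never verify that a value which wins the majority vote (possibly after tie-breaking) in each of $S_\ell$ and $S_{\ell'}$ separately is still a winner over $S_\ell \cup S_{\ell'}$. This is in fact true---if $x$ attains the maximum count in every merging part, then its total count over the union is at least the total count of any other value---but it needs to be stated. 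The paper's iterative formulation sidesteps this entirely because each exchange is performed with respect to the \emph{current} cluster structure, so the invariant is maintained without a separate union argument. Add this one-line verification and your proof is complete.
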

\iflong
\begin{proof}
    Consider any witness matrix $\MM'$ and assume there is a cluster $S$ which does not satisfy the above property in a column $j$. Let $x$ be the winner of the majority vote in $\set{\MM[s_i, j]\mid i\in [|S|]}$. Let $\MM^*$ be an $m\times n$ matrix such that $\MM^*[s_i,j] = x$ for all $i\in [|S|]$ and all remaining entries are the same as in $\MM'$. 
    Note that $\editcost{\MM}{\MM^*} \le \editcost{\MM}{\MM'}$ and $\dr(\MM^*) \le \dr(\MM')$.
    Fairness is preserved as well, which follows immediately if $S$ is still a cluster in $\MM^*$. Otherwise, there is a cluster $S'$ in $\MM$ such that $S\cup S'$ is a cluster in $\MM^*$. As the union of two fair clusters is fair, $\MM^*$ is fair. Repeating the above exhaustively yields a witness matrix $\MM^*$ satisfying the property.
\end{proof}
\fi

\thmkplusr*
\begin{proof}
    Since we can remove at most $k$ clusters with $k$ edits, if there are more than $k+r$ types in $\MM$, we correctly reject.
    If $\tilde{c} > k$, we use the algorithm in \cref{thm:k_le_c}. 
    Otherwise, we have $c \le \tilde{c} \le k$.
    
    Consider a modified reduced edit graph where there are $k/2$ new types that are undetermined (we will determine them later).
    The number of new types is at most $k/2$, because we can assume that a cluster with a new type is formed by rows of at least two types in $\MM$ (otherwise it would be cheaper to keep the previous type).
    Since $c < k$, there are at most $\bigoh((2{3k/2+r \choose 2}k)^{k})$ such graphs. We disregard graphs with outgoing edges of new types.
    For each graph, we then fix the new types according to the majority vote rule described in \cref{lem:nice_solution}.
    Finally, we check whether the resulting graph is indeed a fair reduced edit graph and whether the total edge weight is at most $k$.

    It is straightforward to see that for a \yesinstance, there exists a graph that passes all the checks above, while for a \noinstance, each graph will fail at least one check.
    We then output accordingly.
    The running time of the algorithm is at most $(k + r)^{\bigoh(k)} \cdot (n+m)^{\bigoh(1)}$, and the theorem follows.
\end{proof}

\iflong
\begin{corollary}\label{cor:k_plus_min_r}
    \fairedit is fixed-parameter tractable with respect to $k+\dr(\MM)$.
\end{corollary}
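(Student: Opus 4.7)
The plan is to reduce the statement to Theorem~\ref{thm:k_plus_r} by bounding the ``effective'' value of $r$ in terms of $\dr(\MM)$ and $k$. The key observation is that any solution $\MM'$ to an instance $(\MM, \g, k, r)$ of \fairedit satisfies $\dr(\MM') \le \dr(\MM) + k$. Indeed, every row of $\MM'$ is either unchanged from $\MM$---in which case its type already occurs in $\MM$---or differs from its counterpart in $\MM$ in at least one entry. Since the total budget is $k$, at most $k$ rows can fall into the second category, and therefore at most $k$ types in $\MM'$ can fail to appear in $\MM$. Hence at most $\dr(\MM) + k$ distinct types arise overall.

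Given this observation, the plan is to set $r' \coloneqq \min(r, \dr(\MM) + k)$ and run the algorithm of Theorem~\ref{thm:k_plus_r} on the equivalent instance $(\MM, \g, k, r')$. Equivalence is immediate: any solution to the original instance already satisfies $\dr(\MM') \le \dr(\MM) + k \le \max(r', \dr(\MM) + k)$, so imposing the tighter bound $r'$ discards no solutions; conversely, any solution under the tightened bound is of course a solution under the original bound $r \ge r'$.

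The running time of Theorem~\ref{thm:k_plus_r} on the modified instance is $f(k + r') \cdot (n+m)^{\bigoh(1)}$ for some computable function $f$. Since $k + r' \le k + \dr(\MM) + k = 2k + \dr(\MM)$, this is bounded by $f(2k + \dr(\MM)) \cdot (n+m)^{\bigoh(1)}$, which is fixed-parameter tractable with respect to $k + \dr(\MM)$ as desired. There is no real obstacle in this argument: the entire proof rests on the elementary counting of newly created types, and the main theorem is applied as a black box.
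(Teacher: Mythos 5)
Your proof is correct and follows the same strategy as the paper: cap $r$ at a quantity bounded in terms of $\dr(\MM)$ and $k$, then invoke Theorem~\ref{thm:k_plus_r}. The only (harmless) difference is that the paper uses the slightly tighter bound $\dr(\MM) + k/\tilde{c}$, obtained by noting that any newly created type must head a fair cluster of size at least $\tilde{c}$ and therefore costs at least $\tilde{c}$ edits, whereas your fairness-oblivious bound $\dr(\MM)+k$ is weaker but still suffices for fixed-parameter tractability in $k + \dr(\MM)$.
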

\begin{proof}
    With $k$ edits at most $\frac{k}{\tilde{c}}$ fair clusters with new types can be created. Thus, $(\MM, \g, r, k)$ is a \yesinstance if and only if $(\MM, \g, \min(r, \dr(\MM)+ \frac{k}{\tilde{c}}), k)$ is a \yesinstance, so we can use \cref{thm:k_plus_r} and assume $r \le \dr(\MM)+ \frac{k}{\tilde{c}}$.
\end{proof}
\fi

With Theorem~\ref{thm:k_plus_r}, we have an essentially full picture of the parameterized complexity of \fairedit\ w.r.t.\ $k$, $r$, $p(\MM)$, $c$ and $\tilde{c}$. Indeed, the problem is fixed-parameter tractable w.r.t.\ every superset of $\{k,r\}$. All remaining considered parameterizations which are supersets of $\{k\}$ yield \XP-tractability and \wone-hardness. Finally, \fairedit\ is para\NP-hard under every considered parameterization not covered by the first two cases.

\section{Parameterized Approximation}
\label{sec:approx}


In this section, we discuss the parameterized approximability of \fairedit.
To this end, we say that \fairedit\ admits an \emph{$\alpha$-approximate} fixed-parameter algorithm if there is a fixed-parameter algorithm that for every instance $\II=(\MM, \g, r, k)$ of \fairedit either correctly identifies that $\II$ is a \noinstance, or outputs a fair $m \times n$ matrix $\MM'$ such that $\dr(\MM')\le r$ and $\editcost{\MM}{\MM'} \le \alpha k$.
\iflong
Before discussing the $(5-3/\tilde{c})$-approximate fixed parameter algorithm in \cref{thm:approx}, we present the two following lemmas that support the proof of the theorem.
The first one is an easy consequence of the triangle inequality.

\begin{longlemma}\label{lem:fair_send_or_receive}
    Every \yesinstance $(\MM, \g, r, k)$ of \fairedit is witnessed by solution $\MM^*$ such that in $R(\egr{\MM^*})$, every $\MM$-fair type has no out-edges or no in-edges.
\end{longlemma}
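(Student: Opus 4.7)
The plan is to start from an arbitrary witness $\MM'$ and iteratively transform it into $\MM^*$ by a sequence of local row reassignments at each $\MM$-fair type $\tau$ that still has both in-edges and out-edges in $R(\egr{\MM'})$.

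The key structural observation concerns such a $\tau$, which necessarily survives in $\MM'$. Let $a_i$, $b_i$, $s_i$ count the color-$i$ rows contributing to, respectively, the in-edges, out-edges, and self-loops at $\tau$ in $\egr{\MM'}$, and let $a,b,s$ be the corresponding totals. The $\MM$-fairness of $\tau$ gives $s_i + b_i = \tfrac{\abs{\g}_i}{m}(s+b)$ for every color $i$, while the fairness of the $\MM'$-cluster of $\tau$ gives $s_i + a_i = \tfrac{\abs{\g}_i}{m}(s+a)$. Subtracting, $a_i - b_i = \tfrac{\abs{\g}_i}{m}(a-b)$. Hence, WLOG $a \ge b$, in which case $a_i \ge b_i$ for every color used by $\g$; so every out-edge of $\tau$ can be injectively paired with an in-edge of $\tau$ of the same color.

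Given such a matched pair $\sigma \xrightarrow{t} \tau$ and $\tau \xrightarrow{t'} \sigma'$ sharing some color $c$, the \emph{swap} reassigns row $t$ from cluster $\tau$ to cluster $\sigma'$ and row $t'$ back to cluster $\tau$. This replaces the two edges by the single edge $\sigma \to \sigma'$ (plus a self-loop at $\tau$), and the triangle inequality $\dist{\sigma}{\sigma'} \le \dist{\sigma}{\tau} + \dist{\tau}{\sigma'}$ guarantees the total edit cost does not grow. Fairness is preserved everywhere: clusters $\tau$ and $\sigma'$ each exchange one color-$c$ row for another, leaving their sizes and color distributions intact, while cluster $\sigma$ is unchanged. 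No new types are introduced, so $\dr$ does not increase.

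Iterating the swap at $\tau$ exhausts the smaller side, leaving $\tau$ one-sided in $R(\egr{\MM'})$. Crucially, a swap does not alter the in/out status of any other type: at $\sigma$, the out-edge merely retargets from $\tau$ to $\sigma'$, and at $\sigma'$, the in-edge merely re-sources from $\tau$ to $\sigma$; in the special case $\sigma=\sigma'$, both edges simply disappear into a self-loop. Processing every $\MM$-fair type once, in any order, therefore produces the desired $\MM^*$. The main obstacle is ensuring fairness after each swap; this is precisely what forces the paired in- and out-edges to share a color, and the existence of such a color-matched pairing relies on \emph{both} the $\MM$-cluster and the $\MM'$-cluster of $\tau$ being fair, which yields the proportional counts derived above.
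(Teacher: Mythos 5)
Your proof takes the same approach as the paper's: at an $\MM$-fair type with both in- and out-edges, find an in-edge and out-edge of the same color, reroute them into a single edge plus a self-loop using the triangle inequality, and iterate. The paper merely asserts the existence of a color-matched in/out pair from $\MM$- and $\MM'$-fairness; your counting argument $a_i - b_i = \tfrac{\abs{\g}_i}{m}(a-b)$ makes that step rigorous, and your per-type exhaustion argument is a valid alternative to the paper's simpler ``each swap drops the reduced edge count by one'' termination.
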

\begin{proof}
	Let $\MM'$ be a witness of the \yesinstance $(\MM, \g, r, k)$.
	Suppose there exists an $\MM$-fair type that has both an in-edge and out-edge in $R(\egr{\MM'})$.
	Since $\tau$ is $\MM$- and $\MM^*$-fair, there is an out-edge $(\tau, \tau')$ in $R(\egr{\MM'})$ and an in-edge $(\tau'',\tau)$ in $R(\egr{\MM'})$ of the same color. 	
	Then we replace the edge $(\tau, \tau')$ by $(\tau, \tau)$ and $(\tau'',\tau)$ by $(\tau'',\tau')$. 
	This operation does not increase the total edge weight since the weight of the two edges changes from $\dist{\tau''}{\tau}+\dist{\tau}{\tau'}$ to $\dist{\tau''}{\tau'}+\dist{\tau}{\tau} \le \dist{\tau''}{\tau}+\dist{\tau}{\tau'}$ by the triangle inequality.
	Since the number of edges in the reduced edit graph is reduced by one, exhaustively repeating this procedure takes polynomial time.
	It is then easy to see that the resulting graph is the edit graph corresponding to a solution that meets the requirement of the lemma.
\end{proof}

The second lemma constitutes the key structural result that \cref{thm:approx} hinges upon.
\fi
\ifshort
We start with the following key technical result that our main theorem for this section hinges upon. 
\fi

\begin{lemma}\label{lem:nice_solution_approx}
    For every \yesinstance $(\MM, \g, r, k)$ of \fairedit, there is a fair matrix $\MM'$ such that \linebreak \textbf{(i)} at most $r$ vertices survive in $\egr{\MM'}$;
    \textbf{(ii)} all these surviving vertices are types of $\MM$; \linebreak
    \textbf{(iii)} each $\MM$-fair type either has no out-neighbor or has no in-neighbor and at most one \linebreak out-neighbor in $R(\egr{\MM'})$; and 
    \textbf{(iv)} the total edge weight of $\egr{\MM'}$ is at most $(5 - 3/\tilde{c})k$.
\end{lemma}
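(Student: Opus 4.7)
The plan is to start with a minimum-cost witness $\MM^*$ of the YES-instance (so that $\editcost{\MM}{\MM^*} \le k$) and to transform it in three stages into the desired $\MM'$, carefully tracking the total edge-weight blowup. Throughout, condition (i) is preserved because no stage increases the number of surviving types: the stages either redirect edges, merge clusters of distinct types, or leave the survivor set unchanged.

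Stage 1 enforces the first half of condition (iii) by a triangle-inequality ``uncrossing'' argument: whenever an $\MM$-fair type $\tau$ has both an incoming edge $(\tau'', \tau)$ and an outgoing edge $(\tau, \tau')$ of matching color in the reduced edit graph, I would replace these two edges by $(\tau'', \tau')$ and a self-loop at $\tau$; by the triangle inequality the total edge weight does not increase, and exhaustive application leaves each $\MM$-fair type as either a pure sink or a pure source. Stage 2 then enforces (ii) by replacing each surviving type $\sigma \notin \type(\MM)$ with an existing $\MM$-type: writing $S$ for the cluster routed to $\sigma$ and $C_\sigma$ for its incoming cost, an averaging argument over $s \in S$ combined with the triangle inequality shows there exists $s^* \in S$ such that rerouting all of $S$ to $\tau_\sigma = \MM[s^*,\star]$ costs at most $2 C_\sigma$. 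Fairness is preserved because the union of two fair clusters is fair.

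Stage 3 completes condition (iii) by consolidating the outgoing edges at each remaining $\MM$-fair source type $\tau$. Since $\tau$ has no incoming edges after Stage 1, its current cluster is a fair subset of its original $\MM$-cluster, so the complementary set of rows leaving $\tau$ is also fair (the difference of two fair sets) and admits a partition into fairlets of size $\tilde{c}$. I would redirect all such fairlets to a single optimally chosen destination $\tau^* \in \type(\MM)$, using compensating swaps between former destinations to restore their fairness. A weighted triangle-inequality argument over the candidate destinations, combined with the amortization that each compensating swap is spread over the $\tilde{c}$ rows of a fairlet, yields the final bound $5 - 3/\tilde{c}$.

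The main obstacle is preventing the stages from compounding multiplicatively: a naive composition would give a factor-$4$ or worse bound. The key is to charge Stage 2's factor-$2$ inflation only against edges entering new types and Stage 3's inflation only against outgoing edges of $\MM$-fair source types, exploiting that these two edge sets are largely disjoint. The $3/\tilde{c}$ savings in the final bound reflect precisely the Stage-3 amortization: larger fairlets allow the consolidated rerouting to reuse a larger fraction of the original $\MM^*$ cost instead of incurring fresh charges, while $\tilde{c} = 1$ (no genuine fairness constraint) recovers the classical factor-$2$ guarantee.
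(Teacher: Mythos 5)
Your plan correctly identifies the uncrossing step (Stage 1, matching the paper's Lemma~\ref{lem:fair_send_or_receive}), the new-type removal (Stage 2), and the need to consolidate the out-edges of each $\MM$-fair source to a single destination (Stage 3). The obstruction you name---that composing a factor-$2$ with a factor-$2$ na\"{\i}vely overshoots the target---is also real. However, Stage 3 has a genuine gap: you gesture at ``compensating swaps between former destinations to restore their fairness'' and a ``weighted triangle-inequality argument \dots combined with \dots amortization'', but you never specify a procedure that actually carries it out. When you redirect a fairlet of $\tilde{c}$ edges (one per color-slot) from $\tau$ to a single $\tau^*$, each former destination $\rho_i$ loses exactly one edge of one specific color and becomes unfair; repairing all the $\rho_i$'s requires moving still other edges, and those moves can cascade. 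Without a global structure controlling which edges get moved, there is no way to argue that the total incurred cost is bounded, let alone by something as tight as $(5-3/\tilde{c})k$.

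The paper's proof fills exactly this hole with a decomposition that your proposal does not contain. It builds, for each color $i$, an auxiliary bipartite graph $H_i$ whose vertices are fairlet-parts of the in-edges ($(+)$-nodes) and fairlet-parts of the out-edges ($(-)$-nodes), and whose every vertex has degree $c_i$; by K\H{o}nig's theorem each $H_i$ splits into $c_i$ monochromatic perfect matchings, giving $\tilde{c}$ matchings $M_1,\dots,M_{\tilde{c}}$ overall. The consolidation is then performed by aligning every fair $(-)$-node's edges to its $M_1$-neighbor along alternating paths in $M_1\cup M_j$, which preserves the invariant that every $(+)$-node's incident edges remain a fair $\tilde{c}$-set throughout. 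This is precisely the ``compensating swap'' made rigorous, and it is also what produces the $-3/\tilde{c}$ term: the final bound is $4k+(\tilde{c}-3)k'$ where $k'$ is the weight of $M_1$, and because $M_1$ is the cheapest of $\tilde{c}$ matchings that partition all of $G$'s edges, $k'\le k/\tilde{c}$. Your intuition that the savings come from ``spreading each swap over the $\tilde{c}$ rows of a fairlet'' is pointing in the right direction, but it becomes a proof only once you have the matching decomposition to organize the swaps and the cheapest-matching argument to price them. Note also a secondary mismatch in stage ordering: the paper removes new types \emph{last} (so that the consolidation and closest-redirect steps are not disturbed by it), whereas you do it second; this is not obviously fatal, but it means your charging scheme cannot simply mirror the paper's and would need its own justification.
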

\iflong\begin{proof}\fi
\ifshort\begin{proofsketch}\fi
	Let $\MM^*$ be a solution of the \yesinstance $(\MM, \g, r, k)$, and recall that $G := \egr{\MM^*}$ may contain self-loops.
	\iflong
	Further, by \cref{lem:fair_send_or_receive}, we can assume that for every $\MM$-fair type $\tau$, in $R(G)$, $\tau$ has no out-edges or no in-edges.
	\fi
	\ifshort
	By the redirection of some edges and a simple application of the triangle inequality, we can assume that for every $\MM$-fair type $\tau$, in $R(G)$, $\tau$ has no out-edges or no in-edges.
	\fi
	For the construction of the partitions and auxiliary graphs described below, see \cref{fig:example_aux_graph} for an illustration.
	
\begin{figure}[ht]
    \centering
    \includegraphics[page=4]{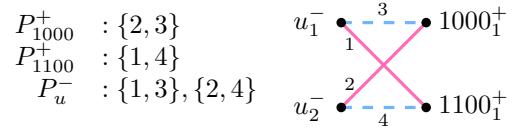}%
    \caption{For the edit graph in \cref{fig:example_edit_graph}, the partitions of edges into fair sets of size $\tilde{c} = 2$ are described on the left. (Numbers refer to labels of the edges in \cref{fig:example_edit_graph}.) Note that there are no $\MM$-fair types, and hence no partitions of the form $P^-_{\tau}$ in this example. The right side depicts the union of two auxiliary graphs for the two color classes, one with the pink solid edges and one with the blue dashed edges.}
    \label{fig:example_aux_graph}
\end{figure}
	
    As $G$ is fair, for each type $\tau$ we can partition the set of in-edges to $\tau$ (including self-loops) in $G$ into fair sets of size $\tilde{c}$. 
    Let $P_\tau^+$ denote such a partition.
    For each $\MM$-fair type $\tau$, additionally consider an arbitrary partition $P_\tau^-$ of the out-edges of $\tau$ in $G$ into fair sets of size $\tilde{c}$; here, $P_\tau^-$ is guaranteed to exist since $\tau$ is $\MM$-fair and has no out-edges or no in-edges in $R(G)$. 
    As the set of all out-edges from all $\MM$-fair types is fair, the set of all out-edges of all $\MM$-unfair types is fair as well. 
    Thus there exists a partition $P_u^-$ of the latter set into fair sets of size $\tilde{c}$.

    For every color $i\in [c]$, create an auxiliary undirected bipartite graph $H_i$ from $G$ as follows.
    For each type $\tau$, create $|P_\tau^+|$ vertices $\tau^+_1, \tau^+_2, \dots$, called \emph{$(+)$-nodes}. 
    For each $\MM$-fair type $\tau$, create $|P_\tau^-|$ vertices $\tau^-_1, \tau^-_2, \dots$, called \emph{fair $(-)$-nodes}. 
    Further, we create $|P_u^-|$ \emph{unfair $(-)$-nodes} $u^-_1, u^-_2, \dots$\iflong; note that unlike the previous kinds of nodes, these are not linked to any specific single type in $\MM$\fi.
    Consider an edge $(\tau, \rho)$ of color $i$ and some label $t \in [m]$ in $G$, i.e., $(\tau, \rho)$ intuitively represents the change in the $t^{\text{th}}$ row of $\MM$.
    Suppose this edge belongs to the $j^{\text{th}}$ part of $P_{\rho}^+$ and the $\lambda^{\text{th}}$ part of either $P_{\tau}^-$ or $P_{u}^-$, depending on whether $\tau$ is $\MM$-fair or not.
    Then we add to $H_i$ an edge of label $t$ and color $i$ between $\rho^+_j$ and either $\tau^-_{\lambda}$ or $u^-_{\lambda}$, whichever is applicable. 
    Proceed this way for all edges of color $i$.

    Let $c_i$ be the number of occurrences of the color $i$ in a fairlet with respect to $\g$.    
    Observe that $H_i$ as constructed above is bipartite with all the $(+)$-nodes as one part and all the $(-)$-nodes as the other.
    Further, every vertex in $H_i$ has degree $c_i$.
    Such a graph can be decomposed into $c_i$ perfect matchings~\cite{MR1511872}.
    Doing this for all $i \in [c]$, we obtain $\sum_{i\in [c]} c_i = \tilde{c}$ matchings in the multigraph $H$, defined as the edge-disjoint union of all $H_i$.
    We label these matchings $M_1, \dots, M_{\tilde{c}}$ in the increasing order of their total edge weights, breaking ties arbitrarily.
    Note that for each row index $t \in [m]$ exactly one edge $e$ of $H$ has label $t$, this edge has color $\g[t]$ and belongs to exactly one perfect matching.
    We assign the weight of the edge of label $t$ in $G$ to $e$.

    \ifshort
    We now modify $G$ in three phases.
    In Phase 1, our aim is to make every fair $(-)$-node have exactly one neighbor in $H$, namely its neighbor in $M_1$.
    We do this by repeatedly identifying an alternating path in the union of $M_1$ and another matching $M_j$. 
    Then we modify the edges in the matching $M_j$ in the path; most of them become identical to the edges in $M_1$, except for potentially one edge incident to an unfair $(-)$-node, whose weight may increase.
    We translate these changes in $H$ into $G$ to obtain a graph $G_1$.
    This means that in $G_1$, the set of edges from an $\MM$-fair type to another type is fair.
    
    In Phase 2, we modify $G_1$ to obtain $G_2$:
    For every $\MM$-fair type $\tau$, we redirect all its out-edges to the type $\tau_{\min}$ closest to $\tau$, among all surviving types in $G$ ($\tau_{\min}$ may be $\tau$ itself).
    Then each $\MM$-fair type has only one out-neighbor in $G_2$.
    \iflong

\fi
    In Phase 3, we aim to remove the new types (i.e., types that are not in $\MM$) from $G_2$.
    Let $\tau$ be a surviving new type in $G_2$.
    Note that $\tau$ only has in-edges.
    Let $\tau^{\text{$+$-opt}}$ be the closest in-neighbor of $\tau$ in $G$.
    We then redirect all in-edges of $\tau$ to $\tau^{\text{$+$-opt}}$.
    Exhaustively performing the procedure above until there is no surviving new type, we obtain an edit graph $G_3$.
    
    We then show that $G_3$ satisfies conditions (i)--(iv) of the lemma.
    For (iv), note that due to Phase 2, the total weight on out-edges from $\MM$-fair types may only increase in the last phase (by a factor of 2).
    For the out-edges from $\MM$-unfair types, we bound the total weight increase via a careful charging scheme and the fact that $M_1$ has the smallest total edge weight among $M_1,\ldots, M_{\tilde{c}}$.
    \fi
    \iflong
    We now proceed in three phases of modifications, after which we will argue that the desired properties (i)-(iv) hold.
    In Phase 1, our aim is to make every fair $(-)$-node have exactly one neighbor in $H$, namely its neighbor in $M_1$.
    Let $G_1$ be a graph initialized as $G$.
    We make a sequence of modifications to $H$ and $G_1$, while maintaining the following invariants:
    \begin{itemize}
    	\item[(a)] $M_1, \dots, M_{\tilde{c}}$ are $\tilde{c}$ perfect matchings in $H$ such that the edges in each matching have the same color;
    	\item[(b)] The incident edges of every vertex of $H$ form a fair set of $\tilde{c}$ edges, one edge for each of the $\tilde{c}$ perfect matchings above;
    	\item[(c)] There is a correspondence between the edges of the same label in $G_1$ and $H$. More precisely, for every row index $t \in [m]$, the edge of label $t$ in $G_1$ is of the form $(\tau, \rho)$ if and only if the edge of label $t$ in $H$ is between a $(+)$-node of $\rho$ and either a $(-)$-node of $\tau$ or an unfair $(-)$-node, depending on whether $\tau$ is $\MM$-fair.
    \end{itemize}

    
    Suppose there exists a fair $(-)$-node $v_1$ and an index $j \in \set{2, \dots, \tilde{c}}$ such that the neighbor of $v_1$ in $M_1$ is different from that in $M_j$.
    Let $e=v_2v_1$ and $e'=v_0v_1$ be the incident edges of $v_1$ in $M_1$ and $M_j$, respectively.
    Observe that the union of $M_1$ and $M_j$ is a collection of pairwise vertex-disjoint cycles, and $e$ and $e'$ are in the same cycle $C$.
    Since $v_2\neq v_0$, this cycle $C$ has length more than two.
    Note that the edges of $C$ must alternate between $M_1$ and $M_j$, in the sense that two adjacent edges of $C$ do not belong to the same perfect matching.
    
    Now, consider the maximal subpath $P = (v_0, v_1, \dots, v_{2\ell-1})$ of $C$ such that 
	for each $\zeta \in [\ell-1]$, $v_{2\zeta-1}$ is a fair $(-)$-node.
    Note that $v_{2\ell-1}$ is the only $(-)$-node that can be unfair in $P$, and if $C$ does not have any unfair $(-)$-node, then $P$ contains all vertices of $C$.
    For $i \in [2\ell]$, let $e_i$ be the edge between $v_{i-1}$ and $v_i$ in $P$ (where we define $v_{2\ell} = v_0$), and let $t_i$ be the label of this edge.
    By applying Invariant (c) on $P$, there is a unique sequence $(\tau_{2\ell}=\tau_0, \tau_1, \dots, \tau_{2\ell-1})$ of vertices in $G_1$ such that the edge of label $t_{i}$ in $G_1$ is of the form $(\tau_{i}, \tau_{i-1})$ for odd $i$ and $(\tau_{i-1}, \tau_{i})$ for even $i$. 
    We then do the following.
    In $H$, for $i \in [\ell]$, we replace the endpoints of the edge $e_{2i-1}=v_{2i-2} v_{2i-1}$ with $v_{2i-1}, v_{2i}$, while keeping all other properties of $e_{2i-1}$ intact (including its color, label, and membership in $M_j$).
    In $G_1$, for $i \in [\ell]$, we \emph{redirect} the edge $(\tau_{2i-1}, \tau_{2i-2})$ of label $t_{2i-1}$ to $\tau_{2i}$; that is, we replace its head $\tau_{2i}$ by $\tau_{2i-2}$ while keeping its color and label intact.
    Note that the edge of label $t_{2i-1}$ now has the weight $\dist{\tau_{2i-1}}{\tau_{2i}}$.
    We also update the edge of label $t_{2i-1}$ in $H$ with the same weight.
    Note that the invariants are maintained, while the neighbors of $v_1$ in $M_1$ and $M_j$ are now the same.
    Further, for any fair $(-)$-node $w$, the number of edges between it and its neighbor in $M_1$ either stays the same or is increased by one (since we potentially move its incident edge in $M_j$ in the operation above). 

    Thus, exhaustively performing the operation above results in a graph $H$ such that every fair $(-)$-node has only one neighbor.
    Consequently, using the invariants, the set of edges in $G_1$ from any $\MM$-fair type to any type is fair.

    In Phase 2, we do the following modification to $G_1$ to obtain $G_2$:
    For every $\MM$-fair type $\tau$, we redirect all its out-edges to the type $\tau_{\min}$ closest to $\tau$, among all surviving types in $G$ ($\tau_{\min}$ may be $\tau$ itself).
    Then each $\MM$-fair type has only one out-neighbor in $G_2$.

    In Phase 3, we aim to remove the new types from $G_2$.
    Let $\tau$ be a surviving new type in $G_2$.
    Note that $\tau$ only has in-edges.
    Let $\tau^{\text{$+$-opt}}$ be the closest in-neighbor of $\tau$ in $G$.
    We then redirect all in-edges of $\tau$ to $\tau^{\text{$+$-opt}}$.
    Exhaustively performing the procedure above until there is no surviving new type, we obtain an edit graph $G_3$.

	We now show that $G_3$ is a fair edit graph for some matrix $\MM'$ and $G_3$ satisfies conditions (i), (ii), (iii) of the lemma.
	Note that the invariants in Phase 1 imply that $G_1$ is a fair edit graph for some matrix $\MM_1$.
	Since in $G_1$, the set of edges from any $\MM$-fair type to any type is fair, $G_2$ is also a fair edit graph for some matrix $\MM_2$.
	Hence in Phases 2 and 3, we redirect only fair sets of edges, and each fair set is always redirected together.
	Therefore, $G_3$ is also a fair edit graph for some matrix $\MM'$.
	Further, in Phases 1 and 2, we only redirect edges to a surviving type in $G$, and consequently, the number of surviving vertices does not increase.
	In Phase~3, when we process a new type $\tau$, we remove one surviving new type (i.e., $\tau$) and add at most one more surviving type (i.e., $\tau^{\text{$+$-opt}}$).
	Hence, (i) holds for $G_3$.
	Next, since new types have no out-edges, and since $\tau^{\text{$+$-opt}}$ is an in-neighbor of $\tau$ in Phase~3, $\tau^{\text{$+$-opt}}$ is not new.
	Therefore, at the end of Phase~3, there is no new type; that is, (ii) holds for $G_3$.
	Lastly, it is easy to see that (iii) holds for $G_2$.
	In Phase~3, the only potential violation of this property is $\tau^{\text{$+$-opt}}$ for some surviving new type $\tau$.
	However, note that if $\tau^{\text{$+$-opt}}$ is $\MM$-fair, then it only has one out-neighbor (i.e., $\tau$) in $R(G_2)$.
	Therefore, after the redirection, it has no out-edges in $R(G_3)$.
	This implies that (iii) also holds for $G_3$.
	
	It remains to prove that $G_3$ satisfies (iv).
	We do this by describing a charging scheme, where a unit charged on an edge of some label $t$ indicates a number of extra edits of at most the weight of the edge of label $t$ in $G$.
	For convenience, if an edge has label $t$ such that $t^{\text{th}}$ row in $\MM$ is $\MM$-fair, we call that edge \emph{$\MM$-fair}.
	Otherwise, we call it \emph{$\MM$-unfair}.
	We have the following observation: 
	\begin{itemize}[topsep=0pt]
		\item[(*)] For an $\MM$-fair edge, if it is in a perfect matching $M_j$ other than $M_1$, it can be only be redirected at most once in Phase 1, since after a redirection, it coincides with an edge in $M_1$, and hence, in $M_j \cup M_1$, this edge belongs to a cycle of length two.
	\end{itemize}

    Consider an $\MM$-fair edge $e$.
    Then after Phase 2, $e$ is of the form $(\tau, \tau_{\min})$ for some $\MM$-fair type~$\tau$.
    By the definition of $\tau_{\min}$, the weight of this edge is at most its original weight in $G$.
    If it is changed in Phase 3, $e$ takes the form $(\tau, \tau^{\text{$+$-opt}}_{\min})$, whose weight, by the triangle inequality, is at most $\dist{\tau}{\tau_{\min}} + \dist{\tau_{\min}}{\tau^{\text{$+$-opt}}_{\min}}$.
    By the definition of $\tau^{\text{$+$-opt}}_{\min}$, the last quantity is at most $2\dist{\tau}{\tau_{\min}}$.
    This implies that we charge at most one unit to the edge of label $t$.
    For clarity later, we call this an \emph{$\alpha$-unit} of charge.

    Now consider two cases for an $\MM$-unfair edge $e$.
    In Case 1, $e$ is not modified in Phase 1, then by the same argument as above, we charge at most one unit to $e$, in case it is modified in Phase 3.
    We call this a \emph{$\beta_1$-unit} of charge.
    In Case 2, $e$ is modified in Phase 1.
    Note that the modifications to $e$ only happen when it is the last edge of some subpath $P$ of a cycle $C$.
    Each time this happens, the increase in edge weight of $e$ is at most the total edge weight along the path between $\tau_0$ and $\tau_{2\ell-2}$, by the triangle inequality.
    Observation (*) above implies that the edges between these two nodes have not been modified before we process $P$. 
    Therefore, for the increase in weight of $e$, we can charge one unit to every edge along the subpath of $P$ between $\tau_0$ and $\tau_{2\ell-2}$.
    We call this a $\beta_2$-unit of charge.
    Now, suppose at the end of Phase 1, $e$ takes the form of $(\tau_{2\ell-1},\tau_0)$.
    Then in Phase 3, it may be modified into $(\tau_{2\ell-1},\tau^{\text{$+$-opt}}_0)$, whose weight is at most $\dist{\tau_{2\ell-1}}{\tau_0} + \dist{\tau_0}{\tau^{\text{$+$-opt}}_0} \leq \dist{\tau_{2\ell-1}}{\tau_0} + \dist{\tau_0}{\tau_1}$.
    That means we will charge one more unit to the edge $(\tau_1, \tau_0)$ (which is not in $M_1$), and we call this a $\beta'_2$-unit.

    We now count the number of charges on each edge.
    Observe that an $\MM$-unfair edge can only be charged by at most one $\beta_1$-unit and nothing else.
    An $\MM$-fair edge in a perfect matching $M_j$ other than $M_1$ is charged at most one $\alpha$-unit, one $\beta_2$-unit, and one $\beta'_2$-unit by a consequence of Observation~(*) above and noting that after such an edge is charged it is part of a cycle of length 2 in the union of $M_1$ and $M_j$.
    Last, by a similar argument, an $\MM$-fair edge in $M_1$ is charged at most one $\beta_2$-unit for every perfect matching $M_j$ other than $M_1$.
    Therefore, in total, such an edge can be charged by at most $(\tilde{c}-1)$ $\beta_2$-units and one $\alpha$-unit.
        
    Let $k'$ be the total edge weight of all edges in $G$ with the same labels as those in $M_1$.
    Then the total edge weight of all other edges in $G$ is at most $k - k'$, since $G$ has total edge weight at most $k$.
    From the previous paragraph, we conclude that the total edge weight of $G_3$ is at most $(\tilde{c}+1)k' + 4(k-k') = 4k + (\tilde{c}-3)k'$.
    Recall that $M_1$ is the perfect matching with the smallest total edge weight.
    Hence, $k' \leq k / \tilde{c}$.
    That means the total edge weight of $G_3$ is at most $4k + (\tilde{c}-3)k/\tilde{c} = (5 - 3/\tilde{c})k$.
    This completes the proof that $G_3$ is the edit graph as required by the lemma.
    \fi
    \iflong
\end{proof}
\fi
\ifshort
\end{proofsketch}
\fi

\iflong 
We are now ready to prove the main result of this section.
\fi

\thmapprox*
\begin{proof}
	Consider an instance $(\MM, \g, r, k)$ of \fairedit.
    	If $\tilde{c} \ge k$, compute an exact solution using \cref{thm:k_le_c}.
    	Otherwise, if it is a \yesinstance then there exists a matrix $\MM'$ as described in \cref{lem:nice_solution_approx}.
	In particular, since the total edge weight of $\egr{\MM'}$ is at most $(5-3/\tilde{c})k$, there are at most $2(5-3/\tilde{c})k$ non-isolated vertices in $R(\egr{\MM'})$.
	Note that all $\MM$-unfair types have to be non-isolated in $R(\egr{\MM'})$.
	
    We now define a colorful variant of the problem as follows:
    Given an \fairedit instance $(\MM, \g, r, k)$ with an assignment of every $\MM$-fair type to a code\footnote{In the \emph{color coding} technique used here, these codes are usually referred to as \emph{colors}~\cite{AlonYZ95}. We employ the term \emph{code} to avoid confusion with the colors in \fairedit instances.}
    in $[2(5-3/\tilde{c})k]$, the goal is to find a fair reduced edit graph $G$ with total edge weight at most $(5-3/\tilde{c})k$ and at most $r$ surviving types such that all $\MM$-fair types of non-isolated vertices have distinct codes, or to decide that no such graph exists.

	Given a fixed-parameter algorithm with respect to $k$ for this colorful variant, we get a $(5 - 3/\tilde{c})$-approximate fixed-parameter algorithm with respect to $k$ for \fairedit, using the \emph{color coding} technique.
    In particular, it is well-established that for at most $m$ types and $2(5-3/\tilde{c})k \leq 10k$ codes one can enumerate a set of encodings of the types (an assignment of each type to a code) in time at most $k^{\bigoh(k)} m$ such that for every set $T$ of types with $|T|\le 10k$ there is at least one encoding in which all codes in $T$ are distinct~\cite[Subsections 5.2 and 5.6]{CyganFKLMPPS15}.
    Thus, given an instance of \fairedit, we can iterate through all enumerated encodings of the $\MM$-fair types and, for each of them, solve the respective instance of the colorful variant. 
    As argued above, for every \yesinstance of \fairedit at least one of the tested encodings will yield a \yesinstance of the colorful variant and output a corresponding reduced edit graph. 
    This reduced edit graph then implies the existence of a computable fair $m \times n$ matrix $\MM'$ such that $\dr(\MM')\le r$ and $\editcost{\MM}{\MM'} \le (5 - 3/\tilde{c}) k$.

	It remains to provide a fixed-parameter algorithm with respect to $k$ to solve a given colorful instance $(\MM, \g, r, k)$ with a code assignment.
	We define a \emph{template} as a directed multigraph $H = (V,E)$, where $V$ contains all $\MM$-unfair types and some other unlabeled vertices.
	Each edge of $H$ has a weight of a positive integer and a color in $[\tilde{c}]$.
	The weight of an edge between two $\MM$-unfair types must be the Hamming distance between the two types, and the total edge weight must be at most $(5-3/\tilde{c})k$.
	Each vertex is incident to at least one edge and is assigned a unique code in $[2(5-3/\tilde{c})k]$.
	Each unlabeled vertex has either at most one out-neighbor and no in-neighbor or no out-neighbors.
	
	By the definition above, a template has at most $2(5-3/\tilde{c})k$ vertices and $(5-3/\tilde{c})k$ edges.
	Since we assume $\tilde{c} < k$, and since the number of codes is at most $10k$, the number of templates is upper bounded by $k^{\bigO{k^4}}$.
	For each template, we test for the existence of an assignment of a type that is $\MM$-fair and not new to each unlabeled vertex, so that the resulting graph can then be \emph{extended} to a reduced unlabeled edit graph on $\MM$ and $\g$ (by adding suitable isolated vertices to the resulting graph).
	 In particular, for each unlabeled vertex $v$ of code $z$ we need to identify an $\MM$-fair type $\tau$ of code $z$ such that 
    (i)~for every out-edge of $v$ there is a distinct row of the same color in the cluster of type $\tau$ in $\MM$ and 
    (ii)~for every edge $(v, \tau')$ or $(\tau', v)$, the weight on the edge equals $\dist{\tau}{\tau'}$ (where $\tau'$ might be in an $\MM$-unfair type or an $\MM$-fair type assigned to some unlabeled vertex).
    
    For every unlabeled vertex, we discard all $\MM$-fair types of the respective code that do not fulfil property (i) or do not satisfy property (ii) for at least one $\MM$-unfair type.
    By the definition of the template, the induced subgraph on the unlabeled vertices is a union of pairwise vertex-disjoint stars, each of which has the edges oriented towards the center. 
    We test whether there is a suitable assignment of the remaining $\MM$-fair types to unlabeled vertices independently for every star.
    For every remaining $\MM$-fair type $\tau$ with the code of the center vertex, test whether for each leaf of the star there is at least one remaining $\MM$-fair type with respective code and whose Hamming distance to $\tau$ matches the weight of the respective edge.
    If this is successful for all unlabeled vertices, the resulting graph witnesses a \yesinstance of the colorful problem variant. 
    Otherwise, we correctly decide that $(\MM, \g, r, k)$ is a \noinstance of \fairedit.
    This process can be completed in polynomial time for each template graph, yielding an overall running time bound of $k^{\bigO{k^4}}\cdot (n+m)^{\bigoh(1)}$.
    
    It remains to show that for every \yesinstance for the colorful variant of \fairedit, at least one branch succeeds. 
    In this case, there is a matrix $\MM'$ as described in \cref{lem:nice_solution_approx}.
    Removing all isolated vertices in $\egr{\MM'}$ and the labels of all $\MM$-fair types yields a valid template and the proof follows.
\end{proof}

\section{A Treewidth-Based Fixed-Parameter Algorithm}
\label{sec:tw}
As our final contribution, we provide an alternative route towards fixed-parameter tractability for \faireditp{2}---specifically, by utilizing the structure of the input matrix $\MM$ rather than the budget $k$. Here, the restriction to binary matrices is necessary in order to facilitate a suitable definition of $\tw(\MM)$; that being said, the binary setting has also been extensively studied in the literature~\cite{FominGP20,kleinberg2004segmentation,OstrovskyR02}.

We first show that instances with a ``large'' fairlet size are trivial when parameterized by treewidth.

\begin{lemma}\label{lem:more_colors_than_tw}
    A \faireditp{2} instance $(\MM, \g, r, k)$ with $2\tw(\MM)+2 \le \tilde{c}$ is a \yesinstance if and only if $k$ is at least the number of non-zero entries in $\MM$.
\end{lemma}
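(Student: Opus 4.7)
My plan is to prove the biconditional by treating the two directions separately. The ``if'' direction is straightforward: when $k$ is at least the number of non-zero entries of $\MM$, the all-zero $m\times n$ matrix $\MM'$ is a solution because $\editcost{\MM}{\MM'}$ equals precisely that number of non-zero entries, $\MM'$ has a single distinct row (hence at most $r$ of them), and the unique cluster containing all $m$ rows trivially exhibits the same color distribution as $\g$.

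For the ``only if'' direction I will show that every fair solution $\MM'$ satisfies $\editcost{\MM}{\MM'} \geq N$, where $N$ denotes the number of non-zero entries of $\MM$. The argument is a cluster-by-cluster, column-by-column accounting. Fix a cluster $S$ of $\MM'$ with common row $\vec{w}$ and a column $j\in [n]$, and let $o(S,j)$ be the number of rows $i\in S$ with $\MM[i,j]=1$. By fairness of $\MM'$ we have $|S|\geq \tilde{c}\geq 2\tw(\MM)+2$. The rows counted by $o(S,j)$ pairwise share a nonzero entry in column $j$, so by the definition of $\primal{\MM}$ they induce a clique; since a graph of treewidth $t$ has clique number at most $t+1$, this yields $o(S,j)\leq \tw(\MM)+1 \leq |S|/2$. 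The number of edits made to rows of $S$ in column $j$ is then at least $o(S,j)$: it equals $o(S,j)$ if $\vec{w}[j]=0$, and equals $|S|-o(S,j)\geq |S|/2 \geq o(S,j)$ if $\vec{w}[j]=1$.

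Summing this inequality across all clusters $S$ and columns $j$ gives $\editcost{\MM}{\MM'}\geq \sum_{S,j} o(S,j) = N$, so any budget $k$ witnessing a \yesinstance must satisfy $k\geq N$, completing the direction. The main obstacle is recognizing that the right quantity to bound per column is the number of $1$s \emph{in that column restricted to the cluster}, and then tying it to treewidth via the clique-in-primal-graph observation; once this connection is made, the proof reduces to elementary double counting, and no structural argument on $\MM'$ beyond the fairness-induced lower bound on cluster sizes is required.
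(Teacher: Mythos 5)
Your proof is correct, and it takes a genuinely different and slightly cleaner route to the lower bound than the paper's. The paper first invokes Lemma~\ref{lem:nice_solution} to restrict attention to a witness whose cluster centers are column-wise majority votes, then combines the treewidth bound with the fairness-induced lower bound $|S|\geq\tilde c\geq 2w$ on cluster sizes to argue that every majority vote (breaking ties toward~$0$) must return~$0$; the witness is therefore the all-zero matrix, whose edit cost manifestly equals the number $N$ of nonzero entries. You instead run a direct double-counting argument against an \emph{arbitrary} fair solution: for each cluster $S$ and column $j$ you lower-bound the per-column, per-cluster edit count by $o(S,j)$, the number of $1$s of column $j$ inside $S$, using the same clique-in-$\primal{\MM}$ observation to show $o(S,j)\le\tw(\MM)+1\le|S|/2$ so that the bound holds whether the center's $j$-th entry is $0$ or $1$, and then summing over all $S$ and $j$. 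Your variant avoids any appeal to Lemma~\ref{lem:nice_solution} and in fact establishes the marginally stronger statement that \emph{every} fair matrix is at Hamming distance at least $N$ from $\MM$; the paper's derivation is slightly shorter because it leans on a structural lemma that it needs anyway for its other results.
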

\begin{proof}
    If $k$ is at least this large, changing every non-zero entry to zero will witness a \yesinstance by creating a single fair cluster containing all rows.
    For the other direction, 
    let $w= \tw(\MM) + 1$ and assume a \yesinstance is witnessed by a matrix $\MM'$ that satisfies \cref{lem:nice_solution}.
    Note that, due to the fairness requirement, every cluster in $\MM'$ contains at least $\tilde{c} \ge 2w$ rows.  
    For every column $j$ there are at most $w$ rows $i$ with $\MM[i,j]\neq 0$ as all rows with a $1$ in column $j$ would form a clique and thus cannot exceed the width $w$ of the tree decomposition. 
    Hence, in every column at least half of the rows in $S$ have value $0$, so by \cref{lem:nice_solution} and breaking ties in favor of $0$, every cluster has type $(0,\ldots,0)$. Thus, we have $k \ge \editcost{\MM}{\MM'}$, which equals the number of non-zero entries in $\MM$.
\end{proof}

\thmtw*
\iflong\begin{proof}\fi
\ifshort\begin{proofsketch}\fi
    Let $w = \tw({\MM}) + 1$ and consider a nice tree decomposition \ifshort 
$\mathtt{T}$    \fi \iflong $(\mathtt{T},\chi)$ \fi of $\primal{\MM}$ with treewidth at most $\tw({\MM})$. 
    If $\tilde{c} \ge 2w$ we solve the instance in linear time by using \cref{lem:more_colors_than_tw}.

    Otherwise, we have $\tilde{c} < 2w$ and employ a treewidth dynamic program which, intuitively, works on the following principles. 
    At each node of $\mathtt{T}$, we keep track of which \emph{states} (configurations of relevant values) are \emph{candidates} (i.e., whether they describe a valid sub-solution).
    The theorem then follows as we ensure that the number of states at each node is upper bounded by $f(w)(nm)^{\bigoh(1)}$ for a computable function $f$, the candidates at each node can be efficiently identified given the candidates of its child nodes, and any candidate at the root describes a valid solution matrix.
    
    Here, a state describes how a potential sub-solution partitions the rows (i.e., vertices in the primal graph) in the bag into clusters, the number of distinct types in the sub-solution, and the total cost of all columns which are \emph{processed} for this bag.
    Crucially, we show that for each column, there is precisely one node where a vertex representing a row with value $1$ in that column is forgotten and all remaining rows with a $1$ in that column are in the bag. 
    At this step, using \cref{lem:nice_solution}, we can already \emph{process} the column, that is, compute the total number of required edits in that column based on each possible partition of the vertices in the bag, as this partition reveals in which clusters there will be a majority of $1$s in that column (as all rows outside the bag have a $0$ there).

    To ensure fairness, states further include some additional information.
    Intuitively, whenever a new row $r$ of color $i$ is added to a bag, there are three possibilities.
    Firstly, $r$ may be part of a cluster that is of type $(0,\dots,0)$ or that includes rows from previous bags and does not overlap the current one.
    With the above observation, we know that $r$ does not share any $1$s in any column with the previous rows in the cluster, so we set all such new rows aside.
    Secondly, $r$ may be part of a cluster that overlaps the current bag.
    In this case, we need to ensure that the cluster still has space for color $i$. 
    In order to check this, for each cluster in a hypothetical solution overlapping the current bag, the state tracks its final size and how many rows of which color in that cluster were already encountered in the bag of its descendants. 
    This way, we know whether ``adding'' a new row to an existing cluster would make it irrepairably unfair (since from the final size we know how many rows of each color it should have). 
    We show that one can assume the final size of such clusters to be at most $2w$ since for larger clusters all 1s would be set to 0s by \cref{lem:more_colors_than_tw}, so the corresponding rows would be set aside.
    Lastly, $r$ may be part of a new cluster that has no rows in any of the previous or the current bags and is of type other than $(0, \dots, 0)$.
    To ensure that we do not open up too many clusters which cannot be filled later, the state also keeps track of the total (final) size of all clusters which have received at least one row from the bag or its descendants (at most $m$).    


    In the end, for a candidate, as the set of all rows is fair and the total final size of clusters that we added vertices to is at most $m$, there are sufficiently many rows of each color set aside to fill up all the remaining spots in the clusters. The other set-aside rows form a fair set as well and are placed in a cluster with type $(0,\ldots,0)$, yielding a fair solution.

\iflong
    More formally, we define a \emph{state} with respect to a node $b$ in $\mathtt{T}$ as a tuple $(\calP, s, a, q, h)$, with a partition $\calP$ of $\chi(b)$, a size function $s$, and non-negative integers $a \le m, q\le r$, and $h \le k$.
    Here, a size function $s$ maps each of the sets $P\in \calP$ to $s(P) = (s_0, s_1, \ldots, s_c)$, where $s_0 \in \set{0} \cup \set{i\tilde{c} \mid i\in \N, i\tilde{c} \le 2w}$ and the $s_z$ with $z\in [c]$ are non-negative integers such that $s_z \le \frac{|\g|_z}{m} s_0$. For convenience, we define $s_z(P)$ to refer to value $s_z$ in $s(P)$ for all $P\in\calP$ and $z \in \set{0} \cup [c]$.

    Intuitively, 
    $\calP$ partitions the rows $\chi(b)$ according to which clusters they share in a potential solution, 
    $s$ is such that $s_0$ is the total size of the corresponding cluster in the solution (note that $s_0$ can be $0$ or the size of any fair cluster up to size $2w$) and $s_1,\ldots s_c$ track how many rows of each color have been assigned to that cluster.
    The special case $s_0(P)=0$ will not necessarily correspond to all rows in $P$ sharing a cluster but simply indicate that all of their entries with value 1 will be set to 0 (though other 0 values might be set to 1). 
    Further, $a$ tracks the total number of rows that have already been accounted for, $q$ tracks how many distinct types have been created, and $h$ tracks the total cost of all \emph{processed columns} $f(b)$.
    Here, $f(b)$ is the set of columns $j$ for which there is a row $i$ represented by a vertex in $\descendants(b) \setminus \chi(b)$ such that $\MM[i,j] = 1$.

    We say that a state $(\calP, s, a, q, h)$ for a node $b$ is a \emph{candidate} for $b$ if it can be extended to a partial solution in $\descendants(b)$.
    Formally, such a state is a candidate if there is a partition $\calQ$ of the rows in $\descendants(b)$ with $|\calQ|=q$ and a function $t: \calQ \to \set{0}\cup\set{i\tilde{c} \mid i\in \N, i\tilde{c} \le 2w}$ with the following properties:
\begin{enumerate}[label=(\alph*)]
    \item The partition $\calQ$ restricted to $\chi(b)$ is $\calP$.
    \item If $P \subseteq Q$ for some $P\in\calP, Q\in \calQ$, then $s_0(P) = t(Q)$ and if $s_0(P) > 0$ then for each $z\in[c]$ the number of rows of color $z$ in $Q$ is $s_z(P)$.
	\item $\sum_{Q\in\calQ} t(Q) = a$.
	\item For every $Q\in\calQ$ with $t(Q) > 0$ and every color $z\in [c]$ there are at most $\frac{\abs{\g}_z}{m}\cdot t(Q)$ rows of color $z$ in $Q$. 
	\item $h = \sum_{j\in f(b),Q\in\calQ} \cost(t(Q), 1(j,Q))$,
\end{enumerate}
where we let $1(j,Q)$ denote the number of 1s in the $j^{\text{th}}$ column in $Q$ and define $\cost(x,y)$ as follows.
If $x = 0$ or $y \le x/2$ then $\cost(x, y) = y$. Otherwise, $\cost(x, y) = x-y$. 
Note that by using \cref{lem:nice_solution}, if $t(Q) > 0$ then $\cost(t(Q), 1(j,Q))$ is precisely the required number of edits in column $j$ in a cluster $Q^* \supseteq Q$ of size $t(Q)$ with $\MM[i,j] = 0$ for each row $i \in Q^*\setminus Q$ to give all rows in $Q^*$ the same value in column $j$.

We call a candidate $(\emptyset, \emptyset, a, q, h)$ at the root node \emph{suitable} if and only if $q \le r-1$ or $a = m$.

\begin{claim*}
    The root node $b_0$ has a suitable candidate if and only if $(\MM, \g, r, k)$ is a \yesinstance.
\end{claim*}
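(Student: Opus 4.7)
The plan is to establish the two implications separately.

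\textbf{($\Leftarrow$)} Given a suitable candidate $(\emptyset,\emptyset,a,q,h)$ at $b_0$ witnessed by a partition $\calQ$ and function $t$ satisfying (a)--(e), I will construct a valid matrix $\MM'$. Call a row \emph{flexible} if it lies in some $Q' \in \calQ$ with $t(Q') = 0$. For every $Q \in \calQ$ with $t(Q) > 0$, I form a cluster of size $t(Q)$ containing the rows of $Q$ together with $t(Q) - |Q|$ flexible rows chosen so that the cluster has exactly $\frac{\abs{\g}_z}{m}\,t(Q)$ rows of each color $z$; the remaining flexible rows form one cluster of type $(0,\ldots,0)$. Condition (d) together with $a \le m$ guarantees that enough fillers of each color are available, and the fact that $\tilde{c}$ divides $m - a$ ensures the all-zero cluster is fair. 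The suitability condition then yields $\dr(\MM') \le r$: if $q \le r-1$ the all-zero cluster contributes at most one extra distinct type, whereas if $a = m$ no flexible rows remain to form it.

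The delicate step is bounding $\editcost{\MM}{\MM'} \le h$. The term $\cost(t(Q),1(j,Q))$ in (e) implicitly assumes that all fillers have value $0$ in column $j$, while the $t(Q')=0$ terms simultaneously charge every $1$-entry of a flexible row to a flip-to-$0$. I will carry out a per-column case analysis on the majority type of each filled cluster: if a filler has a $1$ in a column where the cluster type is $0$, the additional edit in the cluster exactly cancels the over-charge in the flexible term; if the cluster type is $1$ there, both terms over-estimate, giving a strict reduction in actual cost. Summed across all columns, the actual cost is therefore at most $h \le k$.

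\textbf{($\Rightarrow$)} Let $\MM'$ witness the \yesinstance. By \cref{lem:nice_solution} (breaking ties in favor of $0$), I may assume every cluster's type equals the column-wise majority of its rows in $\MM$. Since rows with a $1$ in a common column form a clique in $\primal{\MM}$, at most $w$ rows contain a $1$ in any given column, so every cluster of size exceeding $2w$ must have type $(0,\ldots,0)$. As clusters are maximal sets of identical rows, at most one all-zero cluster $S_0$ exists, and every non-all-zero cluster $S \in \calS$ has $|S| \le 2w$ with $|S|$ a multiple of $\tilde{c}$ by fairness. I construct the candidate at $b_0$ by taking $\calQ$ to contain each $S \in \calS$ and, if $S_0 \ne \emptyset$, the additional set $S_0$; setting $t(S) = |S|$ for $S \in \calS$ and $t(S_0) = 0$; and letting $a = \sum_{S \in \calS}|S|$, $q = |\calS|$, and $h = \editcost{\MM}{\MM'}$. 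Conditions (a) and (b) are vacuous since $\chi(b_0) = \emptyset$; (c) is immediate; (d) is tight by fairness of every $S \in \calS$; and (e) follows via direct computation from the majority-vote property, noting that columns outside $f(b_0)$ contain only zero entries. For suitability, either $S_0 = \emptyset$ and $a = m$, or $S_0 \ne \emptyset$, in which case the total cluster count of $\MM'$ gives $|\calS| + 1 \le r$ and hence $q = |\calS| \le r-1$.

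I expect the main obstacle to be the cost-bookkeeping in the forward direction: it is not \emph{a priori} obvious why flexible rows with $1$-entries placed into non-trivial clusters do not push the edit count beyond $h$, and the resolution rests on a precise cancellation between the over-estimate in a cluster's $\cost$ term and the over-estimate in the corresponding flexible term.
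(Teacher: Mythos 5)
Your proposal mirrors the paper's proof of the claim in both directions: the forward direction absorbs the $t(Q')=0$ ``flexible'' rows into the nonzero clusters to attain fair sizes and collects the remainder into a single all-zero cluster, while the reverse direction applies \cref{lem:nice_solution} together with the clique-size bound in $\primal{\MM}$ to force every cluster of size exceeding $2w$ to have type $(0,\ldots,0)$, then reads off the candidate from the partition and sizes. The only cosmetic difference is in verifying (e): the paper uses the single inequality $\cost(t(Q),1(j,Q^*)) \le \cost(t(Q),1(j,Q)) + 1(j,Q^*\setminus Q)$, whereas you sketch an entrywise cancellation argument between the cluster term and the flexible term, which amounts to the same bookkeeping.
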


\begin{claimproof}
Recall that $\chi(b_0)=\emptyset$ so for any candidate we have $\calP = s = \emptyset$.
Suppose the root node has a candidate $(\emptyset, \emptyset, a, q, h)$ witnessed by some $\calQ$ and $t$. Then $\calQ$ partitions all rows in $\MM$.
Obtain a partition $\calQ^*$ from $\calQ$ by, for each $Q\in \calQ$ such that $t(Q) > 0$ and for each color $z$, moving just enough rows from sets $Q'$ with $t(Q') = 0$ to $Q$ such that $Q$ has $\frac{|\g|_z}{m}\cdot t(Q)$ rows of color $z$. 
Using (c) and (d) with $a \le m$, there are sufficiently many rows of color $z$ in sets $Q'$ with $t(Q') = 0$. 
This way, the new cluster $Q^*\supseteq Q$ consists of $t(Q)$ rows and is fair with respect to $\g$.
Further, for each $j\in [n]$, the number of edits required in the $j^{\text{th}}$ column in $Q^*$ is at most $\cost(t(Q), 1(j,Q^*)) \leq  \cost(t(Q),1(j,Q)) + 1(j, Q^* \setminus Q)$. 
Last, create one cluster from all remaining rows of sets $Q'$ with $t(Q') = 0$. As all other clusters are fair, these remaining rows form a fair cluster as well.
If we assign this cluster ${Q'}^*$ the type $(0,\ldots,0)$, the total number of edits in its rows is $\sum_{j\in [n]} 1(j, {Q'}^*)$. 
Note that this way, in each column $j$ every $Q'\in \calQ$ with $t(Q')=0$ receives precisely $1(j,Q') = \cost(0, 1(j,Q'))$ edits and each other $Q\in \calQ$ receives $\cost(j,1(jQ))$ edits.
Thus, as $f(b)= [n]$\footnote{Technically, $f(b)$ does not contain a column $j$ if $\MM[\star, j] = (0,\ldots,0)$. However, we can assume that no such column exists as otherwise it suffices to solve the instance without that column and later re-adding it and assigning each row a 0 in that column.} and by (e), the total number of edits is at most $h \le k$.
Note that there is a cluster of type $(0,\ldots,0)$ only if $a < m$ and that $q$ is the number of all clusters with at least one $1$ in the type. 
As we have $q \le r-1$ if $a < m$ and $q \le r$ else, there are at most $|\calQ^*| \le r$ distinct types.
Thus, the matrix $\MM'$ obtained by editing $\MM$ into the partitioning described by $\calQ^*$ and using the types as described above witnesses $(\MM, \g, r, k)$ to be a \yesinstance. 

For the other direction, we first note that every \yesinstance is witnessed by a matrix $\MM'$ such that all clusters are of size at most $2w$ unless they have type $(0,\ldots, 0)$: 
Consider a witness matrix $\MM'$ that satisfies \cref{lem:nice_solution} and suppose there is a cluster $S$ with $|S|>2w$ with type $\tau$.
 By the definition of treewidth, $\primal{\MM}$ has no clique of size more than $w$.
Thus no column of $\MM$ has more than $w$ non-zero entries, and hence, more than half of the rows in $C$ have value $0$ in column $j$, giving $\tau[j] = 0$ for every column.

Now suppose such a matrix $\MM'$ witnesses $(\MM, \g, r, k)$ to be a \yesinstance, where $\MM'$ consists of $q\le r$ fair clusters as described by a partition $\calQ$ of $[m]$ and requires $h \le k$ edits. Let $o$ be the number of rows of type $(0,\ldots,0)$.
Then the state $(\emptyset, \emptyset, m-o, q', h)$ with $q'=q$ if $o=0$ and $q'=q-1$ otherwise, is a candidate at the root as $\calQ$ and $t$ satisfy all required properties, where $t(Q) = 0$ if $Q$ has type $(0,\ldots,0)$ in $\MM'$ and $t(Q) =|Q|$, otherwise.
\end{claimproof}

Thus, identifying whether the root has a candidate suffices to decide the instance.
We use the following dynamic program to identify all states that are candidates at each node of $\mathtt{T}$, starting from the leaves and working up to the root. 
For each kind of node, we describe how candidates are identified under the assumption that all candidates at all child nodes are correctly identified. The correctness of the algorithm then follows by a simple inductive argument.

For convenience, we define the following notations for a bag $b$ and some row $i\in \chi(b)$. 
For some partition $\calP$ over a set of rows $U$ (either $\chi(b)$ or $\descendants(b)$), let $\calP^{- i} = \set{P\setminus \set{i}\mid P\in \calP}\setminus\set{\emptyset}$ be the partition of $U\setminus\set{i}$ as induced on $\calP$.
Further, for a partition $\calQ$ of $\descendants(b)$ and a function $t$ on $\calQ$, let $t^{-i}$ be the same function as $t$ but on $\calQ^{-i}$, that is, $t^{-i}(Q\setminus\set{i}) = t(Q)$ for all $Q\in \calQ$ except if $Q=\set{i}$ (so $t^{-i}$ is not defined on $\emptyset$).
Similarly, for a partition $\calP$ of $\chi(b)$ and a function $s$ on $\calP$ as described above, let $s^{-i}$ be the same function as $s$ but on $\calP^{-i}$, that is, $s^{-i}(P\setminus\set{i}) = s(P)$ for all $P\in \calP$, except if $P=\set{i}$ (so $s^{-i}$ is not defined on $\emptyset$).
To further account for the color of row $i$, we define $s^{-i,\g}$ to equal $s^{-i}$ except that for $P\in \calP$ with $i\in P$ we have $s_{\g[i]}^{-i,\g}(P\setminus\set{i}) = s_{\g[i]}(P)-1$ (only if $P\neq \set{i})$. 

\para{Leaves.} For a leaf $b$, we have $\descendants(b)=\emptyset$ so the only candidate is $(\emptyset, \emptyset, 0, 0, 0)$ by $\calQ = t = \emptyset$.

\para{\textsf{Introduce} Nodes.} 
For an \textsf{introduce} node~$b$ that introduces a vertex representing a row~$i$ of color $z$ into a bag $\chi(\overline{b})$, a state $(\calP, s, a, q, h)$ is a candidate of $b$  if and only if there exists a candidate  $(\calP^{-i}, s^{-i,\g}, \overline{a}, \overline{q}, h)$ of $b$, such that 
either $\overline{a} = a$, $\overline{q} = q$, and $|\calP^{-i}| = |\calP|$ 
(so $i$ has been added to an existing set) or 
$\overline{a} = a-s_0(\set{i})$,
 $\overline{q} = q-1$, 
$s_z(\set{i})=1$, $s_{z'}(\set{i})=0$ for each color $z'\neq z$,
and $\calP^{-i} = \calP \setminus \set{\set{i}}$
 (so $i$ has been added to a new set).

Suppose  $(\calP^{-i}, s^{-i,\g}, \overline{a}, \overline{q}, h)$ is witnessed to be a candidate for node $\overline{b}$ by some $\overline{\calQ}$ and $\overline{t}$.
If $\set{i}\in \calP$, let $\calQ = \overline{\calQ} \cup \set{\set{i}}$ and $t$ equal $\overline{t}$ but additionally define $t(\set{i}) = s_0(\set{i})$. 
Otherwise, there is $P\in \calP^{-i}$ such that $(P \cup \set{i})\in \calP$. 
In this case, let $\calQ$ be the same partition as $\overline{\calQ}$ but add $i$ to the set $Q \supseteq P$ and let $t$ equal $\overline{t}$ except $t(Q)$ is not defined and instead $t(Q \cup \set{i}) = \overline{t}(Q)$.
Then $\calQ$ and $t$ witness $(\calP, s, a, q, h)$ to be a candidate for $b$, where we remark that (e) holds as follows. 
Note that $\MM[i,j]=0$ for every $j\in f(\overline{b}) = f(b)$ as otherwise there is a row $i'$ represented by a vertex in $\descendants(b)\setminus\chi(b)$ with $\MM[i,j]=\MM[i',j]=1$. Then, by definition, the vertices representing $i$ and $i'$ are adjacent in $\primal{\MM}$, which contradicts them being separated by node $\overline{b}$ in the tree decomposition.
Thus, $\cost(t(\set{i}), 1(j,\set{i})) = 0$ and for every $Q \in \calQ$ we have $1(j,Q\setminus \set{\set{i}}) = 1(j,Q)$, so $\cost(t(Q\setminus \set{\set{i}}), 1(j,Q\setminus\set{\set{i}})) = \cost(t(Q), 1(j,Q))$.  

For the other direction, let $(\calP,s,a,q,h)$ be a candidate for $b$ witnessed by some $\calQ$ and $t$. Then $\calQ^{-i}$ and $t^{-i}$ witness $(\calP^{-i}, s^{-i,\g}, \overline{a}, \overline{q}, h)$ to be a candidate for $\overline{b}$, where for (e) we once more observe that row $i$ does not change the cost for any column $j\in[n]$ in any set of the partition.

\para{\textsf{Forget} Nodes.} 
For a \textsf{forget} node~$b$ that removes a vertex~$v$ from a bag $\chi(\overline{b})$, a state $(\calP, s, a, q, h)$ of $b$ is a candidate, if and only if there exists a candidate $(\overline{\calP}, \overline{s}, a, q, \overline{h})$ of $\overline{b}$, such that 
$\calP = \overline{\calP}^{-i}$,
$s = \overline{s}^{-i}$, and
$h = \overline{h} + \sum_{\overline{P} \in \overline{\calP}}\sum_{j\in f(b)\setminus f(\overline{b})} \cost(\overline{s}_0(\overline{P}), 1(j,\overline{P}))$.

Suppose $(\overline{\calP}, \overline{s}, a, q, \overline{h})$ is a candidate for $\overline{b}$ as described above witnessed by some $\calQ$ and $t$.
Then $\calQ$ and $t$ immediately also satisfy properties (a)-(d) for $(\calP, s, a, q, h)$ on $b$.
For (e), note that $f(b)\setminus f(\overline{b})$ contains precisely the columns in which $i$ has a $1$ and no other row in $\descendants(b)\setminus\chi(b)$ has a $1$.
We further have that in each such column, all rows outside $\descendants(b)$ have a $0$: otherwise, the vertex representing the row would be adjacent to the one representing $i$, which contradicts the tree decomposition forgetting $i$ before encountering the other row.
Thus, for every column $j\in f(b)\setminus f(\overline{b})$ all rows with value 1 in column $j$ are represented in $\overline{b}$. 
Consider any $Q\in \calQ$.
If $Q\cap \overline{b} = \emptyset$, then for every such column $j$ we have $1(j,Q) = \cost(t(Q)) = 0$.
Otherwise there is $P\in \calP$ such that $P\subseteq Q$. 
Then, by the above, $1(j, P) = 1(j, Q)$. 
By property (b) we get that $s_0(P) = t(Q)$ and thus $\cost(s_0(P), 1(j,P))= \cost(t(Q), 1(j,Q))$.
Hence, with $h = \overline{h} + \sum_{\overline{P} \in \overline{\calP}}\sum_{j\in f(b)\setminus f(\overline{b})} \cost(s_0(\overline{P}), 1(j,\overline{P}))$, $h$ satisfies property (e) for $(\calP, s, a, q, h)$ on $b$, and thus this state is a candidate for $b$.

For the other direction, let $(\calP, s, a, q, h)$ be a candidate for $b$ as witnessed by some $\calQ$ and $t$.
Let $Q \in \calQ$ be such that $i\in Q$.
If $Q\cap \chi(b) = \emptyset$, let $\overline{\calP} = \calP \cup \set{\set{i}}$. 
Otherwise let $\overline{\calP}$ equal $\calP$ except that the set intersecting $Q$ additionally contains $i$ in $\overline{\calP}$.
Note that in either case $\overline{\calP}^{-i} = \calP$.
Further, define $\overline{s}$ such that for each $\overline{P} \in \overline{\calP}$ we have
 $\overline{s}(\overline{P})=s(\overline{P})$ if $i\notin \overline{P}$ and otherwise let 
$\overline{s}_0(\overline{P}) = t(Q)$ and for each $z\in [c]$ let $\overline{s}_z(\overline{P})$ be the number of rows of color $z$ in $Q$. 
Then $\overline{s}^{-i} = s$. 
Let $\overline{h} = h - \sum_{\overline{P} \in \overline{\calP}}\sum_{j\in f(b)\setminus f(\overline{b})} \cost(\overline{s}_0(\overline{P}), 1(j,\overline{P}))$.
Then $\calQ$ and $t$ witness $(\overline{\calP}, \overline{s}, a, q, \overline{h})$ to be a candidate for $\overline{b}$ as follows. Properties (a), (c), and (d) hold immediately.
For (b), the only differences between $\overline{\calP}$ and $\calP$ as well as $\overline{s}$ and $s$ concern the set which includes $i$. Hence, for all other sets, (b) remains satisfied by $Q$ and $t$ and for this set, (b) holds by the definition of $\overline{s}$.
For (e), recall from the argument above that the total cost changes by precisely $\sum_{\overline{P} \in \overline{\calP}}\sum_{j\in f(b)\setminus f(\overline{b})} \cost(s_0(\overline{P}), 1(j,\overline{P}))$ between the two bags.

\para{\textsf{Join} Nodes.} 
	For a \textsf{join} node~$b$ that joins two nodes~$\overline{b}$ and~$\widetilde{b}$, a state $(\calP, s, a, q, h)$ of~$b$ is a candidate, if there exist candidates~$(\calP, \overline{s}, \overline{a}, \overline{q}, \overline{h})$ and $(\calP, \widetilde{s}, \widetilde{a}, \widetilde{q}, \widetilde{h})$ of $\overline{b}$ and $\widetilde{b}$, respectively, such that 
    $a = \overline{a} + \widetilde{a} - \sum_{P \in \calP} s_0(P)$,
    $q = \overline{q} + \widetilde{q} - |P|$, 
    $h = \overline{h} + \widetilde{h}$, 
    and for each $P\in \calP$ we have 
    $s_0(P) = \overline{s}_0(P) = \widetilde{s}_0(P)$ as well as 
    $s_z(P)$ equals $\overline{s}_z(P)+\widetilde{s}_z(P)$ minus the number of rows of color $z$ in $P$, for each color $z\in [c]$.

	Let $\overline{\calQ},\overline{t},\widetilde{\calQ},\widetilde{t}$ be the partitions and functions witnessing the candidates of $\overline{b}$ and $\widetilde{b}$, respectively.
    Let $Q$ be the joined partition of $\overline{Q}$ and $\widetilde{Q}$, that is, 
    $\calQ = 
    \set{Q\in \overline{\calQ}\mid Q\cap \descendants(\widetilde{b})=\emptyset}
    \cup
    \set{Q\in \widetilde{\calQ}\mid Q\cap \descendants(\overline{b})=\emptyset}
    \cup \set{\overline{Q} \cup \widetilde{Q} \mid \exists P\in \calP, \overline{Q}\in \overline{\calQ}, \widetilde{Q}\in \widetilde{Q}: P \subseteq \overline{Q}\land P\subseteq \widetilde{Q}}$.
    Note that $|\calQ| = |\overline{\calQ}|+|\widetilde{\calQ}|-|\calP| = q$. 
    Define function $t$ such that for each $Q\in \calQ$ we have
    $t(Q) = \overline{t}(Q)$ if $Q\cap \descendants(\overline{b})\neq 0$ and 
    $t(Q) = \widetilde{t}(Q)$, otherwise.
    Then $Q$ immediately satisfies property (a) of $(\calP, s, a, q, h)$ on~$b$ and property (b) is satisfied by the requirement on $s, \overline{s}$, and $\widetilde{s}$.
    For (c), each $Q\in \calQ$ with $Q\cap \chi(b)=\emptyset$ is accounted for once in $\overline{a} + \widetilde{a}$ and each $Q$ with $Q\cap \chi(b)\neq \emptyset$ is accounted for twice, as there is precisely one $P\in \calP$ with $P\subseteq Q$ and by (b) we have $t(Q) = s_0(P) = \overline{t}(Q) = \widetilde{t}(Q)$. Thus, (c) is satisfied by subtracting these sets counted twice:
    $a = \overline{a} + \widetilde{a} - \sum_{P \in \calP} s_0(P)$.
    Property (d) is satisfied for sets $Q\in \calQ$ with $Q\cap\chi(b)\neq\emptyset$ due to property (b). For the remaining sets, it holds since $\overline{Q},\overline{t},\widetilde{Q},\widetilde{t}$ satisfy property (d) for their respective candidate and bag.
    Last, property (e) holds since by the definition of a tree decomposition we have $\descendants(\overline{b})\cap \descendants(\widetilde{b}) = \chi(b)$, so $f(\overline{b}) \cap f(\widetilde{b}) = \emptyset$ and, by the definition of tree decompositions, there is no column such that both $\descendants(\overline{b})\setminus \chi(b)$ and $\descendants(\widetilde{b})\setminus \chi(b)$ have a row in which the entry of that column is $1$. Thus, the total cost of columns in $f(b)$ is simply the sum of the cost of columns in $f(\overline{b})$ plus the cost of columns in $f(\widetilde{b})$.
	
	For the other direction, assume $(\calP, s, a, q, h)$ is a candidate for node $b$ as witnessed by some $\calQ$ and $t$. 
	Let $\overline{\calQ} = \set{Q\cap \descendants(\overline{b})  \mid Q\in \calQ } \setminus \set{\emptyset}$ and, for each $Q \in \calQ$ with $Q\cap \descendants(\overline{b}) \neq \emptyset$, let
    $\overline{t}(Q\cap \descendants(\overline{b})) = t(Q)$.    
    Define $\widetilde{\calQ}$ and $\widetilde{t}$ analogously and note that with $\overline{q}=|\overline{\calQ}|$ and $\widetilde{q} = |\widetilde{\calQ}|$ we have $ q = \overline{q} + \widetilde{q} - |\calP|$.
    Further, let $\overline{s}, \overline{a},$ and $\overline{h}$ be such that they satisfy (b), (c), and (e) for $(\calP, \overline{s}, \overline{a}, \overline{q}, \overline{h})$ with $\overline{\calQ}$ and $\overline{t}$ on node $\overline{b}$.
    Properties (a) and (d) immediately hold for $(\calP, \overline{s}, \overline{a}, \overline{q}, \overline{h})$ on $\overline{b}$ by the definition of $\overline{Q}$ and $\overline{t}$, so $(\calP, \overline{s}, \overline{a}, \overline{q}, \overline{h})$ is a candidate for $\overline{b}$.
    The same holds for candidate $(\calP, \widetilde{s}, \widetilde{a}, \widetilde{q}, \widetilde{h})$ on node $\widetilde{b}$, where $\widetilde{s}, \widetilde{a},$ and $\widetilde{h}$ are defined analogously.
    We already established that $q = \overline{q} + \widetilde{q} - |P|$ and, due to the same reasons as used for the other direction, we have  
    $a = \overline{a} + \widetilde{a} - \sum_{P \in \calP} s_0(P)$ and $h = \overline{h} + \widetilde{h}$.
    Last, note that the requirement on $s, \overline{s},$ and $\widetilde{s}$ is satisfied for each $P\in \calP$, so all conditions for the algorithm to classify $(\calP, s, a, q, h)$ as a candidate for $b$ are satisfied.

    This concludes the description of the dynamic program. 
    We remark that the number of distinct states at each bag is in $\bigoh(w^w r  k  m  (2w)^{(c+1)w}) = f(w) (mn)^{\bigoh(1)}$ for a computable function $f$ (using that instances with $k \ge mn$ or $r\ge m$ are trivial and recalling $c \le \tilde{c} < 2w$).
Further, computing all candidates of each node by the above rules in a dynamic programming manner from the leaves to the root takes \fpt time.
\end{proof}
\fi
\ifshort
As the number of distinct states at each bag can be shown to lie in $w^{\bigoh(w^2)}krm$, the proof follows.    
\end{proofsketch}
\fi

\section{Concluding Remarks}
\label{sec:conclusions}
We remark that while our investigation concentrates on the well-established fairlet-based notion of fairness, many of our results can be lifted to different or more general variants. For example, if we assume that the instance is equipped with prescribed ranges for the proportion of each color in the clusters, then the lower bounds in Section~\ref{sec:hardness} carry over immediately to the arising more general problem. Moreover, analogous results to those obtained in Section~\ref{sec:additionalc} can be obtained using similar proof ideas; on the other hand, it is not immediately clear whether or how the more involved algorithmic ideas in Sections~\ref{sec:approx} and~\ref{sec:tw} would generalize to such a setting.

Future work can also focus on optimizing the running time bounds of
the algorithms and the lower bounds arising from the reductions in
order to pinpoint the exact fine-grained complexity of the problem
under the Exponential Time Hypothesis and/or its strong
variant~\cite{ImpagliazzoPZ01}.

Another avenue that could be explored in the future is whether the results of Section~\ref{sec:tw} can be generalized to higher-domain instances.
Finally, while this work provides a foundational analysis of the problem's complexity, it would be interesting to see how and whether the obtained insights can be used in more applied settings.

\section*{Acknowledgements}
The authors were supported by the Austrian Science Foundation (FWF, project 10.55776/Y1329).
The first author was additionally supported by FWF project 10.55776/COE12 and the second author was supported by FWF project 10.55776/ESP1136425.
We would also like to thank Frank Sommer for suggesting the decomposition of regular bipartite graphs into edge-disjoint perfect matchings that inspired the auxiliary graph in \cref{lem:nice_solution_approx}.

\bibliography{refs}

@article{FominGP20,
  author       = {Fedor V. Fomin and
                  Petr A. Golovach and
                  Fahad Panolan},
  title        = {Parameterized low-rank binary matrix approximation},
  journal      = {Data Min. Knowl. Discov.},
  volume       = {34},
  number       = {2},
  pages        = {478--532},
  year         = {2020}
}

@inproceedings{GanianKOS18,
  author       = {Robert Ganian and
                  Iyad A. Kanj and
                  Sebastian Ordyniak and
                  Stefan Szeider},
  editor       = {Jennifer G. Dy and
                  Andreas Krause},
  title        = {Parameterized Algorithms for the Matrix Completion Problem},
  booktitle    = {Proceedings of the 35th International Conference on Machine Learning,
                  {ICML} 2018, Stockholmsm{\"{a}}ssan, Stockholm, Sweden, July
                  10-15, 2018},
  series       = {Proceedings of Machine Learning Research},
  volume       = {80},
  pages        = {1642--1651},
  publisher    = {{PMLR}},
  year         = {2018},
  url          = {http://proceedings.mlr.press/v80/ganian18a.html},
  timestamp    = {Sat, 09 Apr 2022 12:46:07 +0200},
  biburl       = {https://dblp.org/rec/conf/icml/GanianKOS18.bib},
  bibsource    = {dblp computer science bibliography, https://dblp.org}
}

@article {MR1511872,
    AUTHOR = {K\"onig, D\'enes},
     TITLE = {\"{U}ber {G}raphen und ihre {A}nwendung auf
              {D}eterminantentheorie und {M}engenlehre},
   JOURNAL = {Math. Ann.},
  FJOURNAL = {Mathematische Annalen},
    VOLUME = {77},
      YEAR = {1916},
    NUMBER = {4},
     PAGES = {453--465},
      ISSN = {0025-5831,1432-1807},
   MRCLASS = {99-04},
  MRNUMBER = {1511872},
       DOI = {10.1007/BF01456961},
       URL = {https://doi.org/10.1007/BF01456961},
}

@book{CyganFKLMPPS15,
    AUTHOR = {Cygan, Marek and Fomin, Fedor V. and Kowalik, \L ukasz and
              Lokshtanov, Daniel and Marx, D\'{a}niel and Pilipczuk, Marcin
              and Pilipczuk, Micha\l and Saurabh, Saket},
     TITLE = {Parameterized algorithms},
 PUBLISHER = {Springer, Cham},
      YEAR = {2015},
     PAGES = {xviii+613},
      ISBN = {978-3-319-21274-6; 978-3-319-21275-3},
   MRCLASS = {68-01 (05-01 05C85 68Q17 68Q25 68R10 68Wxx)},
  MRNUMBER = {3380745},
MRREVIEWER = {Henning\ Fernau},
       DOI = {10.1007/978-3-319-21275-3},
       URL = {https://doi.org/10.1007/978-3-319-21275-3},
}

@article{Feige14b,
  author       = {Uriel Feige},
  title        = {NP-hardness of hypercube 2-segmentation},
  journal      = {CoRR},
  volume       = {abs/1411.0821},
  year         = {2014},
  url          = {http://arxiv.org/abs/1411.0821},
  eprinttype    = {arXiv},
  eprint       = {1411.0821},
  timestamp    = {Mon, 13 Aug 2018 16:47:00 +0200},
  biburl       = {https://dblp.org/rec/journals/corr/Feige14b.bib},
  bibsource    = {dblp computer science bibliography, https://dblp.org}
}

@article{CandesP10,
  author       = {Emmanuel J. Cand{\`{e}}s and
                  Yaniv Plan},
  title        = {Matrix Completion With Noise},
  journal      = {Proc. {IEEE}},
  volume       = {98},
  number       = {6},
  pages        = {925--936},
  year         = {2010},
  url          = {https://doi.org/10.1109/JPROC.2009.2035722},
  doi          = {10.1109/JPROC.2009.2035722},
  timestamp    = {Fri, 02 Oct 2020 14:42:48 +0200},
  biburl       = {https://dblp.org/rec/journals/pieee/CandesP10.bib},
  bibsource    = {dblp computer science bibliography, https://dblp.org}
}

@article{CandesR12,
  author       = {Emmanuel J. Cand{\`{e}}s and
                  Benjamin Recht},
  title        = {Exact matrix completion via convex optimization},
  journal      = {Commun. {ACM}},
  volume       = {55},
  number       = {6},
  pages        = {111--119},
  year         = {2012},
  url          = {https://doi.org/10.1145/2184319.2184343},
  doi          = {10.1145/2184319.2184343},
  timestamp    = {Tue, 06 Nov 2018 12:51:43 +0100},
  biburl       = {https://dblp.org/rec/journals/cacm/CandesR12.bib},
  bibsource    = {dblp computer science bibliography, https://dblp.org}
}

@article{ElhamifarV13,
  author       = {Ehsan Elhamifar and
                  Ren{\'{e}} Vidal},
  title        = {Sparse Subspace Clustering: Algorithm, Theory, and Applications},
  journal      = {{IEEE} Trans. Pattern Anal. Mach. Intell.},
  volume       = {35},
  number       = {11},
  pages        = {2765--2781},
  year         = {2013},
  url          = {https://doi.org/10.1109/TPAMI.2013.57},
  doi          = {10.1109/TPAMI.2013.57},
  timestamp    = {Wed, 14 Nov 2018 10:50:59 +0100},
  biburl       = {https://dblp.org/rec/journals/pami/ElhamifarV13.bib},
  bibsource    = {dblp computer science bibliography, https://dblp.org}
}

@inproceedings{RothY05,
  author       = {Dan Roth and
                  Wen{-}tau Yih},
  editor       = {Luc De Raedt and
                  Stefan Wrobel},
  title        = {Integer linear programming inference for conditional random fields},
  booktitle    = {Machine Learning, Proceedings of the Twenty-Second International Conference
                  {(ICML} 2005), Bonn, Germany, August 7-11, 2005},
  series       = {{ACM} International Conference Proceeding Series},
  volume       = {119},
  pages        = {736--743},
  publisher    = {{ACM}},
  year         = {2005},
  url          = {https://doi.org/10.1145/1102351.1102444},
  doi          = {10.1145/1102351.1102444},
  timestamp    = {Tue, 06 Nov 2018 16:58:28 +0100},
  biburl       = {https://dblp.org/rec/conf/icml/RothY05.bib},
  bibsource    = {dblp computer science bibliography, https://dblp.org}
}

@article{BredereckCHKNSW14,
  author       = {Robert Bredereck and
                  Jiehua Chen and
                  Sepp Hartung and
                  Stefan Kratsch and
                  Rolf Niedermeier and
                  Ondrej Such{\'{y}} and
                  Gerhard J. Woeginger},
  title        = {A Multivariate Complexity Analysis of Lobbying in Multiple Referenda},
  journal      = {J. Artif. Intell. Res.},
  volume       = {50},
  pages        = {409--446},
  year         = {2014},
  url          = {https://doi.org/10.1613/jair.4285},
  doi          = {10.1613/JAIR.4285},
  timestamp    = {Mon, 27 Jul 2020 15:07:14 +0200},
  biburl       = {https://dblp.org/rec/journals/jair/BredereckCHKNSW14.bib},
  bibsource    = {dblp computer science bibliography, https://dblp.org}
}

@article{EibenGKOS23j,
  author       = {Eduard Eiben and
                  Robert Ganian and
                  Iyad Kanj and
                  Sebastian Ordyniak and
                  Stefan Szeider},
  title        = {On the parameterized complexity of clustering problems for incomplete
                  data},
  journal      = {J. Comput. Syst. Sci.},
  volume       = {134},
  pages        = {1--19},
  year         = {2023},
  url          = {https://doi.org/10.1016/j.jcss.2022.12.001},
  doi          = {10.1016/J.JCSS.2022.12.001},
  timestamp    = {Sun, 04 Aug 2024 19:51:08 +0200},
  biburl       = {https://dblp.org/rec/journals/jcss/EibenGKOS23.bib},
  bibsource    = {dblp computer science bibliography, https://dblp.org}
}

@article{KoanaFN23,
  author       = {Tomohiro Koana and
                  Vincent Froese and
                  Rolf Niedermeier},
  title        = {The complexity of binary matrix completion under diameter constraints},
  journal      = {J. Comput. Syst. Sci.},
  volume       = {132},
  pages        = {45--67},
  year         = {2023},
  url          = {https://doi.org/10.1016/j.jcss.2022.10.001},
  doi          = {10.1016/J.JCSS.2022.10.001},
  timestamp    = {Sun, 25 Dec 2022 14:03:41 +0100},
  biburl       = {https://dblp.org/rec/journals/jcss/KoanaFN23.bib},
  bibsource    = {dblp computer science bibliography, https://dblp.org}
}

@inproceedings{KoanaFN20,
  author       = {Tomohiro Koana and
                  Vincent Froese and
                  Rolf Niedermeier},
  editor       = {Inge Li G{\o}rtz and
                  Oren Weimann},
  title        = {Parameterized Algorithms for Matrix Completion with Radius Constraints},
  booktitle    = {31st Annual Symposium on Combinatorial Pattern Matching, {CPM} 2020,
                  June 17-19, 2020, Copenhagen, Denmark},
  series       = {LIPIcs},
  volume       = {161},
  pages        = {20:1--20:14},
  publisher    = {Schloss Dagstuhl - Leibniz-Zentrum f{\"{u}}r Informatik},
  year         = {2020},
  url          = {https://doi.org/10.4230/LIPIcs.CPM.2020.20},
  doi          = {10.4230/LIPICS.CPM.2020.20},
  timestamp    = {Wed, 21 Aug 2024 22:46:00 +0200},
  biburl       = {https://dblp.org/rec/conf/cpm/KoanaFN20.bib},
  bibsource    = {dblp computer science bibliography, https://dblp.org}
}

@inproceedings{GanianHKOS22,
  author       = {Robert Ganian and
                  Thekla Hamm and
                  Viktoriia Korchemna and
                  Karolina Okrasa and
                  Kirill Simonov},
  editor       = {Kamalika Chaudhuri and
                  Stefanie Jegelka and
                  Le Song and
                  Csaba Szepesv{\'{a}}ri and
                  Gang Niu and
                  Sivan Sabato},
  title        = {The Complexity of k-Means Clustering when Little is Known},
  booktitle    = {International Conference on Machine Learning, {ICML} 2022, 17-23 July
                  2022, Baltimore, Maryland, {USA}},
  series       = {Proceedings of Machine Learning Research},
  volume       = {162},
  pages        = {6960--6987},
  publisher    = {{PMLR}},
  year         = {2022},
  url          = {https://proceedings.mlr.press/v162/ganian22a.html},
  timestamp    = {Tue, 12 Jul 2022 17:36:52 +0200},
  biburl       = {https://dblp.org/rec/conf/icml/GanianHKOS22.bib},
  bibsource    = {dblp computer science bibliography, https://dblp.org}
}

@inproceedings{EibenGKOS21,
  author       = {Eduard Eiben and
                  Robert Ganian and
                  Iyad Kanj and
                  Sebastian Ordyniak and
                  Stefan Szeider},
  title        = {The Parameterized Complexity of Clustering Incomplete Data},
  booktitle    = {Thirty-Fifth {AAAI} Conference on Artificial Intelligence, {AAAI}
                  2021, Thirty-Third Conference on Innovative Applications of Artificial
                  Intelligence, {IAAI} 2021, The Eleventh Symposium on Educational Advances
                  in Artificial Intelligence, {EAAI} 2021, Virtual Event, February 2-9,
                  2021},
  pages        = {7296--7304},
  publisher    = {{AAAI} Press},
  year         = {2021},
  url          = {https://doi.org/10.1609/aaai.v35i8.16896},
  doi          = {10.1609/AAAI.V35I8.16896},
  timestamp    = {Sun, 04 Aug 2024 19:38:19 +0200},
  biburl       = {https://dblp.org/rec/conf/aaai/EibenGKOS21.bib},
  bibsource    = {dblp computer science bibliography, https://dblp.org}
}

@article{kleinberg2004segmentation,
  title={Segmentation problems},
  author={Kleinberg, Jon and Papadimitriou, Christos and Raghavan, Prabhakar},
  journal={Journal of the ACM (JACM)},
  volume={51},
  number={2},
  pages={263--280},
  year={2004},
  publisher={ACM New York, NY, USA}
}

@article{OstrovskyR02,
  author       = {Rafail Ostrovsky and
                  Yuval Rabani},
  title        = {Polynomial-time approximation schemes for geometric min-sum median
                  clustering},
  journal      = {J. {ACM}},
  volume       = {49},
  number       = {2},
  pages        = {139--156},
  year         = {2002},
  url          = {https://doi.org/10.1145/506147.506149},
  doi          = {10.1145/506147.506149},
  timestamp    = {Mon, 03 Mar 2025 22:15:28 +0100},
  biburl       = {https://dblp.org/rec/journals/jacm/OstrovskyR02.bib},
  bibsource    = {dblp computer science bibliography, https://dblp.org}
}

@inproceedings{ZhangLX20,
  author       = {Zhiyue Zhang and
                  Kenneth Lange and
                  Jason Xu},
  editor       = {Hugo Larochelle and
                  Marc'Aurelio Ranzato and
                  Raia Hadsell and
                  Maria{-}Florina Balcan and
                  Hsuan{-}Tien Lin},
  title        = {Simple and Scalable Sparse k-means Clustering via Feature Ranking},
  booktitle    = {Advances in Neural Information Processing Systems 33: Annual Conference
                  on Neural Information Processing Systems 2020, NeurIPS 2020, December
                  6-12, 2020, virtual},
  year         = {2020},
  url          = {https://proceedings.neurips.cc/paper/2020/hash/735ddec196a9ca5745c05bec0eaa4bf9-Abstract.html},
  timestamp    = {Tue, 19 Jan 2021 15:57:16 +0100},
  biburl       = {https://dblp.org/rec/conf/nips/ZhangLX20.bib},
  bibsource    = {dblp computer science bibliography, https://dblp.org}
}

@inproceedings{CharikarHHVW23,
  author       = {Moses Charikar and
                  Monika Henzinger and
                  Lunjia Hu and
                  Maximilian V{\"{o}}tsch and
                  Erik Waingarten},
  editor       = {Alice Oh and
                  Tristan Naumann and
                  Amir Globerson and
                  Kate Saenko and
                  Moritz Hardt and
                  Sergey Levine},
  title        = {Simple, Scalable and Effective Clustering via One-Dimensional Projections},
  booktitle    = {Advances in Neural Information Processing Systems 36: Annual Conference
                  on Neural Information Processing Systems 2023, NeurIPS 2023, New Orleans,
                  LA, USA, December 10 - 16, 2023},
  year         = {2023},
  url          = {http://papers.nips.cc/paper\_files/paper/2023/hash/cbaffeeda13dbd8bf9489feb3f198ff4-Abstract-Conference.html},
  timestamp    = {Fri, 01 Mar 2024 16:26:20 +0100},
  biburl       = {https://dblp.org/rec/conf/nips/CharikarHHVW23.bib},
  bibsource    = {dblp computer science bibliography, https://dblp.org}
}

@inproceedings{MaromF19,
  author       = {Yair Marom and
                  Dan Feldman},
  editor       = {Hanna M. Wallach and
                  Hugo Larochelle and
                  Alina Beygelzimer and
                  Florence d'Alch{\'{e}}{-}Buc and
                  Emily B. Fox and
                  Roman Garnett},
  title        = {k-Means Clustering of Lines for Big Data},
  booktitle    = {Advances in Neural Information Processing Systems 32: Annual Conference
                  on Neural Information Processing Systems 2019, NeurIPS 2019, December
                  8-14, 2019, Vancouver, BC, Canada},
  pages        = {12797--12806},
  year         = {2019},
  url          = {https://proceedings.neurips.cc/paper/2019/hash/6084e82a08cb979cf75ae28aed37ecd4-Abstract.html},
  timestamp    = {Mon, 16 May 2022 15:41:51 +0200},
  biburl       = {https://dblp.org/rec/conf/nips/MaromF19.bib},
  bibsource    = {dblp computer science bibliography, https://dblp.org}
}

@article{BandyapadhyayFS24,
  author       = {Sayan Bandyapadhyay and
                  Fedor V. Fomin and
                  Kirill Simonov},
  title        = {On coresets for fair clustering in metric and Euclidean spaces and
                  their applications},
  journal      = {J. Comput. Syst. Sci.},
  volume       = {142},
  pages        = {103506},
  year         = {2024},
  url          = {https://doi.org/10.1016/j.jcss.2024.103506},
  doi          = {10.1016/J.JCSS.2024.103506},
  timestamp    = {Thu, 22 Aug 2024 20:25:18 +0200},
  biburl       = {https://dblp.org/rec/journals/jcss/BandyapadhyayFS24.bib},
  bibsource    = {dblp computer science bibliography, https://dblp.org}
}

@inproceedings{DickersonEMZ23,
  author       = {John P. Dickerson and
                  Seyed A. Esmaeili and
                  Jamie H. Morgenstern and
                  Claire Jie Zhang},
  editor       = {Alice Oh and
                  Tristan Naumann and
                  Amir Globerson and
                  Kate Saenko and
                  Moritz Hardt and
                  Sergey Levine},
  title        = {Doubly Constrained Fair Clustering},
  booktitle    = {Advances in Neural Information Processing Systems 36: Annual Conference
                  on Neural Information Processing Systems 2023, NeurIPS 2023, New Orleans,
                  LA, USA, December 10 - 16, 2023},
  year         = {2023},
  url          = {http://papers.nips.cc/paper\_files/paper/2023/hash/2ab87e2179b8ea209b52463802d62560-Abstract-Conference.html},
  timestamp    = {Fri, 01 Mar 2024 16:26:19 +0100},
  biburl       = {https://dblp.org/rec/conf/nips/DickersonEMZ23.bib},
  bibsource    = {dblp computer science bibliography, https://dblp.org}
}

@inproceedings{Amagata24,
  author       = {Daichi Amagata},
  editor       = {Sanjoy Dasgupta and
                  Stephan Mandt and
                  Yingzhen Li},
  title        = {Fair k-center Clustering with Outliers},
  booktitle    = {International Conference on Artificial Intelligence and Statistics,
                  2-4 May 2024, Palau de Congressos, Valencia, Spain},
  series       = {Proceedings of Machine Learning Research},
  volume       = {238},
  pages        = {10--18},
  publisher    = {{PMLR}},
  year         = {2024},
  url          = {https://proceedings.mlr.press/v238/amagata24a.html},
  timestamp    = {Mon, 13 May 2024 08:44:02 +0200},
  biburl       = {https://dblp.org/rec/conf/aistats/Amagata24.bib},
  bibsource    = {dblp computer science bibliography, https://dblp.org}
}

@inproceedings{BackursIOSVW19,
  author       = {Arturs Backurs and
                  Piotr Indyk and
                  Krzysztof Onak and
                  Baruch Schieber and
                  Ali Vakilian and
                  Tal Wagner},
  editor       = {Kamalika Chaudhuri and
                  Ruslan Salakhutdinov},
  title        = {Scalable Fair Clustering},
  booktitle    = {Proceedings of the 36th International Conference on Machine Learning,
                  {ICML} 2019, 9-15 June 2019, Long Beach, California, {USA}},
  series       = {Proceedings of Machine Learning Research},
  volume       = {97},
  pages        = {405--413},
  publisher    = {{PMLR}},
  year         = {2019},
  url          = {http://proceedings.mlr.press/v97/backurs19a.html},
  timestamp    = {Sun, 06 Oct 2024 21:05:56 +0200},
  biburl       = {https://dblp.org/rec/conf/icml/BackursIOSVW19.bib},
  bibsource    = {dblp computer science bibliography, https://dblp.org}
}

@inproceedings{Chierichetti0LV17,
  author       = {Flavio Chierichetti and
                  Ravi Kumar and
                  Silvio Lattanzi and
                  Sergei Vassilvitskii},
  editor       = {Isabelle Guyon and
                  Ulrike von Luxburg and
                  Samy Bengio and
                  Hanna M. Wallach and
                  Rob Fergus and
                  S. V. N. Vishwanathan and
                  Roman Garnett},
  title        = {Fair Clustering Through Fairlets},
  booktitle    = {Advances in Neural Information Processing Systems 30: Annual Conference
                  on Neural Information Processing Systems 2017, December 4-9, 2017,
                  Long Beach, CA, {USA}},
  pages        = {5029--5037},
  year         = {2017},
  url          = {https://proceedings.neurips.cc/paper/2017/hash/978fce5bcc4eccc88ad48ce3914124a2-Abstract.html},
  timestamp    = {Thu, 21 Jan 2021 15:15:21 +0100},
  biburl       = {https://dblp.org/rec/conf/nips/Chierichetti0LV17.bib},
  bibsource    = {dblp computer science bibliography, https://dblp.org}
}

@inproceedings{Casel0SW23,
  author       = {Katrin Casel and
                  Tobias Friedrich and
                  Martin Schirneck and
                  Simon Wietheger},
  editor       = {Kunal Talwar},
  title        = {Fair Correlation Clustering in Forests},
  booktitle    = {4th Symposium on Foundations of Responsible Computing, {FORC} 2023,
                  June 7-9, 2023, Stanford University, California, {USA}},
  series       = {LIPIcs},
  volume       = {256},
  pages        = {9:1--9:12},
  publisher    = {Schloss Dagstuhl - Leibniz-Zentrum f{\"{u}}r Informatik},
  year         = {2023},
  url          = {https://doi.org/10.4230/LIPIcs.FORC.2023.9},
  doi          = {10.4230/LIPICS.FORC.2023.9},
  timestamp    = {Wed, 21 Aug 2024 22:46:00 +0200},
  biburl       = {https://dblp.org/rec/conf/forc/Casel0SW23.bib},
  bibsource    = {dblp computer science bibliography, https://dblp.org}
}

@inproceedings{AhmadianEK0MMPV20,
  author       = {Sara Ahmadian and
                  Alessandro Epasto and
                  Marina Knittel and
                  Ravi Kumar and
                  Mohammad Mahdian and
                  Benjamin Moseley and
                  Philip Pham and
                  Sergei Vassilvitskii and
                  Yuyan Wang},
  editor       = {Hugo Larochelle and
                  Marc'Aurelio Ranzato and
                  Raia Hadsell and
                  Maria{-}Florina Balcan and
                  Hsuan{-}Tien Lin},
  title        = {Fair Hierarchical Clustering},
  booktitle    = {Advances in Neural Information Processing Systems 33: Annual Conference
                  on Neural Information Processing Systems 2020, NeurIPS 2020, December
                  6-12, 2020, virtual},
  year         = {2020},
  url          = {https://proceedings.neurips.cc/paper/2020/hash/f10f2da9a238b746d2bac55759915f0d-Abstract.html},
  timestamp    = {Tue, 19 Jan 2021 15:56:58 +0100},
  biburl       = {https://dblp.org/rec/conf/nips/AhmadianEK0MMPV20.bib},
  bibsource    = {dblp computer science bibliography, https://dblp.org}
}

@article{FominGLP020,
  author       = {Fedor V. Fomin and
                  Petr A. Golovach and
                  Daniel Lokshtanov and
                  Fahad Panolan and
                  Saket Saurabh},
  title        = {Approximation Schemes for Low-rank Binary Matrix Approximation Problems},
  journal      = {{ACM} Trans. Algorithms},
  volume       = {16},
  number       = {1},
  pages        = {12:1--12:39},
  year         = {2020},
  url          = {https://doi.org/10.1145/3365653},
  doi          = {10.1145/3365653},
  timestamp    = {Sun, 19 Jan 2025 15:02:12 +0100},
  biburl       = {https://dblp.org/rec/journals/talg/FominGLP020.bib},
  bibsource    = {dblp computer science bibliography, https://dblp.org}
}

@inproceedings{ZhangCLCHF24,
  author       = {Zhen Zhang and
                  Xiaohong Chen and
                  Limei Liu and
                  Jie Chen and
                  Junyu Huang and
                  Qilong Feng},
  editor       = {Amir Globersons and
                  Lester Mackey and
                  Danielle Belgrave and
                  Angela Fan and
                  Ulrich Paquet and
                  Jakub M. Tomczak and
                  Cheng Zhang},
  title        = {Parameterized Approximation Schemes for Fair-Range Clustering},
  booktitle    = {Advances in Neural Information Processing Systems 38: Annual Conference
                  on Neural Information Processing Systems 2024, NeurIPS 2024, Vancouver,
                  BC, Canada, December 10 - 15, 2024},
  year         = {2024},
  url          = {http://papers.nips.cc/paper\_files/paper/2024/hash/6ee0ebd0983f0926496b7d4e0d48b8e3-Abstract-Conference.html},
  timestamp    = {Fri, 11 Apr 2025 13:02:38 +0200},
  biburl       = {https://dblp.org/rec/conf/nips/ZhangCLCHF24.bib},
  bibsource    = {dblp computer science bibliography, https://dblp.org}
}

@article{GoyalJ23,
  author       = {Dishant Goyal and
                  Ragesh Jaiswal},
  title        = {Tight {FPT} Approximation for Socially Fair Clustering},
  journal      = {Inf. Process. Lett.},
  volume       = {182},
  pages        = {106383},
  year         = {2023},
  url          = {https://doi.org/10.1016/j.ipl.2023.106383},
  doi          = {10.1016/J.IPL.2023.106383},
  timestamp    = {Sun, 06 Oct 2024 21:31:10 +0200},
  biburl       = {https://dblp.org/rec/journals/ipl/GoyalJ23.bib},
  bibsource    = {dblp computer science bibliography, https://dblp.org}
}

@inproceedings{EibenGKOS23,
  author       = {Eduard Eiben and
                  Robert Ganian and
                  Iyad A. Kanj and
                  Sebastian Ordyniak and
                  Stefan Szeider},
  editor       = {Andreas Krause and
                  Emma Brunskill and
                  Kyunghyun Cho and
                  Barbara Engelhardt and
                  Sivan Sabato and
                  Jonathan Scarlett},
  title        = {The Computational Complexity of Concise Hypersphere Classification},
  booktitle    = {International Conference on Machine Learning, {ICML} 2023, 23-29 July
                  2023, Honolulu, Hawaii, {USA}},
  series       = {Proceedings of Machine Learning Research},
  volume       = {202},
  pages        = {9060--9070},
  publisher    = {{PMLR}},
  year         = {2023},
  url          = {https://proceedings.mlr.press/v202/eiben23a.html},
  timestamp    = {Mon, 28 Aug 2023 17:23:08 +0200},
  biburl       = {https://dblp.org/rec/conf/icml/EibenGKOS23.bib},
  bibsource    = {dblp computer science bibliography, https://dblp.org}
}

@article{AlonYZ95,
  author       = {Noga Alon and
                  Raphael Yuster and
                  Uri Zwick},
  title        = {Color-Coding},
  journal      = {J. {ACM}},
  volume       = {42},
  number       = {4},
  pages        = {844--856},
  year         = {1995},
  url          = {https://doi.org/10.1145/210332.210337},
  doi          = {10.1145/210332.210337},
  timestamp    = {Wed, 14 Nov 2018 10:35:26 +0100},
  biburl       = {https://dblp.org/rec/journals/jacm/AlonYZ95.bib},
  bibsource    = {dblp computer science bibliography, https://dblp.org}
}

@article{ImpagliazzoPZ01,
  author       = {Russell Impagliazzo and
                  Ramamohan Paturi and
                  Francis Zane},
  title        = {Which Problems Have Strongly Exponential Complexity?},
  journal      = {J. Comput. Syst. Sci.},
  volume       = {63},
  number       = {4},
  pages        = {512--530},
  year         = {2001},
  url          = {https://doi.org/10.1006/jcss.2001.1774},
  doi          = {10.1006/JCSS.2001.1774},
  timestamp    = {Tue, 07 May 2024 20:26:35 +0200},
  biburl       = {https://dblp.org/rec/journals/jcss/ImpagliazzoPZ01.bib},
  bibsource    = {dblp computer science bibliography, https://dblp.org}
  }


\end{document}